\newcommand{\suppress}[1]{}
\newcommand{\reduceitem}{}
\newcommand{\reduceitemm}{}
\newcommand{\centermath}[1]{\smallskip\\\centerline{#1}\smallskip}
\newsavebox{\fmbox}
\newenvironment{fmpage}[1]
     {\medskip\begin{lrbox}{\fmbox}\begin{minipage}{#1}}
     {\end{minipage}\end{lrbox}\fbox{\usebox{\fmbox}}\medskip}
\newcommand{\encadre}[1]{
\begin{center}
\begin{fmpage}{11cm}
#1
\end{fmpage}
\end{center}}
\newcommand{\e}{{\mathrm{e}}}
\newcommand{\complexi}{{\mathrm{i}}}
\newcommand{\complex}{\mathbb{C}}
\newcommand{\Span}{\mathsf{Span}}
\newcommand{\ket} [1] {\lvert #1 \rangle}
\newcommand{\bra} [1] {\langle #1 \rvert}
\newcommand{\braket}[2]{\langle #1\vert #2 \rangle}
\newcommand{\norm}[1]{\lVert #1 \rVert}
\newcommand{\size}[1]{\left\lvert #1 \right\rvert}
\newcommand{\abs}{\size}
\newcommand{\tensor}{\otimes}
\newcommand{\isomorphic}{\cong}
\newcommand{\diag}{{\mathrm{diag}}}
\newcommand{\Order}{{\mathrm{O}}}
\newcommand{\order}{{\mathrm{o}}}
\newcommand{\ancilla}{\ket{\bar{0}}}
\newcommand{\comment}[1]{}
\newcommand{\reflex}{\mathrm{ref}}
\newcommand{\id}{\mathrm{Id}}
\newcommand{\eps}{\varepsilon}
\newcommand{\aitch}{{\mathcal{H}}}
\newcommand{\ay}{{\mathcal{A}}}
\newcommand{\bee}{{\mathcal{B}}}
\newcommand{\beee}{{\mathcal{B'}}}
\newcommand{\ess}{{\mathcal{S}}}
\newcommand{\emm}{{\mathcal{M}}}
\newcommand{\kay}{{\mathcal{K}}}
\newcommand{\beq}{\begin{equation}}
\newcommand{\eeq}{\end{equation}}
\newcommand{\beqa}{\begin{eqnarray}}
\newcommand{\eeqa}{\end{eqnarray}}
\begin{document}

\title{Quantum walk based search algorithms
\thanks{
Research supported by the European Commission IST Integrated Project
Qubit Applications (QAP) 015848, and by the 
ANR Blanc AlgoQP grant of the French Research Ministry. 
%\vfil
}}

\author{
Miklos Santha\inst{1,2}
}

\institute{
{ 
CNRS--LRI, Universit\'e Paris--Sud,
91405 Orsay, France }
\and
{Centre for Quantum Technologies,
Nat. Univ. of Singapore, Singapore 117543
}
}

\maketitle

\begin{abstract}
In this survey paper we  give an intuitive treatment of the discrete time quantization
of classical Markov chains. Grover search and the quantum walk based search algorithms of 
Ambainis, Szegedy and Magniez et al. will be stated as quantum analogues of classical search
procedures. We present a rather detailed description of a somewhat simplified version of the MNRS 
algorithm. Finally,  in the query complexity model, we show how quantum walks can be applied to
the following search problems: Element Distinctness,
Matrix Product Verification, Restricted Range Associativity, Triangle, and Group Commutativity.

\end{abstract}

\section{Introduction}
Searching is without any doubt one of the major problems in computer science.
The corresponding literature is tremendous, most manuals on algorithms
include several chapters that deal with searching procedures~\cite{Knu,CLRS}.
The relevance of finite Markov chains (random walks in graphs) to searching was recognized
from early on, and it is still a flourishing field.
The algorithm of Aleliunas et al.~\cite{AKLLR} 
that solves $s$--$t$ connectivity in undirected
graphs in time $O(n^3)$ and in space $O(\log n)$, and Sch\"oning's algorithm~\cite{Sch} 
that provides the basis of the currently fastest solutions for 3-SAT
are among the most prominent examples for that.

Searching is also a central piece in the emerging field of quantum algorithms.
Grover search~\cite{Grover96}, and in general amplitude amplification~\cite{BrassardHMT02}
are well known quantum procedures which are provably faster than their classical counterpart. 
Grover's algorithm was used recursively
by Aaronson and Ambainis~\cite{AaronsonA05} for searching in grids.

Discrete time quantum walks were introduced gradually by  
Meyer\cite{Meyer96,Meyer96b} in connection 
with cellular automata, and by Watrous in his works related to 
space bounded computations~\cite{Watrous01}. Different parameters related to quantum walks
and possible speedups of classical algorithms were investigated by several 
researchers~\cite{NayakV00,AmbainisBNVW01,AharonovAKV01,MR,Kem1,ric07}.

The potential of discrete time quantum walks with respect to searching problems was first pointed out
by Shenvi, Kempe, and Whaley~\cite{ShenviKW03} 
who designed a quantum walk based
simulation of Grover search. Ambainis, in his seminal paper~\cite{Ambainis04}, used
quantum walks on the Johnson graphs to settle the query complexity
of the Element Distinctness problems. Inspired by the work of Ambainis, 
Szegedy~\cite{Szegedy04} designed a general method to quantize
classical Markov chains, and developed a theory of quantum walk
based search algorithms. 
A similar approach for the specific case of searching in grids was taken
by Ambainis, Kempe and Rivosh~\cite{AmbainisKR05}.
The frameworks of Ambainis and Szegedy were used in various contexts
to find
algorithms  with substantial complexity gains over simple Grover
search~\cite{MagniezSS05,BuhrmanS06,MagniezN05,DT}.
In a recent work, Magniez, Nayak, Roland and Santha~\cite{MNRS}
proposed a new quantum walk based search method that expanded the scope of the previous 
approaches. The MNRS search algorithm is also conceptually simple, and improves
various aspects of many walk based algorithms.

In this survey paper we  give an intuitive (though formal) treatment of the quantization
of classical Markov chains. We will be concerned with discrete time quantum walks, the
continuous case will not be covered here.
Grover search and the quantum walk based search algorithms of 
Ambainis, Szegedy and Magniez et al. will be stated as quantum analogues of classical search
procedures. We present a rather detailed description of a somewhat simplified version of the MNRS 
algorithm. Finally,  in the query complexity model, we show how quantum walks can be applied to
the following search problems: Element Distinctness,
Matrix Product Verification, Restricted Range Associativity, Triangle, and Group Commutativity.
For a detailed introduction to quantum walks the reader is referred to the excellent surveys
of Kempe~\cite{Kem2} and Ambainis~\cite{Amb1}. Another survey on quantum search algorithms is 
also due to Ambainis~\cite{Amb2}.

\section{Classical search algorithms}
\label{classical}
At an abstract level, any search problem may be cast as the problem
of finding a marked element from a set~$X$. 
Let~$M \subseteq X$ be the set of marked elements, and let
$\eps$ be a known lower bound on $|M|/|X|$, the fraction of marked elements,
whenever $M$ is non-empty. 
If no further information is available on $M$, we can choose $\eps$ 
as $1/|X|$. The simplest approach, stated in
{\bf Search Algorithm~1}, to solve this problem
is to repeatedly
sample from~$X$ uniformly until a marked element is picked, if there is any.

\encadre{ {\bf Search Algorithm~1} \\
%\mbox{ } 
Repeat for~$t = O(1/ \eps )$ steps
\begin{enumerate}\reduceitem
\item Sample~$x \in X$ according to the uniform distribution.
\item 
If $x$ is in $M$ then output it and stop.
\end{enumerate}
} 
More sophisticated approaches might use a Markov chain on the state space $X$
for generating the samples.
In that case, to generate the next sample, the resources expended for 
previous generations are often reused.

Markov chains can be viewed as random walks on directed graphs with weighted edges.
We will identify a Markov chain with its transition matrix $P = (p_{xy}).$ A chain is 
{\em irreducible} if every state is accessible from every other state.
An irreducible chain is {\em ergodic} if it is also aperiodic.
The eigenvalues of a Markov chain are at most 1 in magnitude.
By the Perron-Frobenius
theorem, an irreducible chain has a unique stationary distribution~$\pi=(\pi_x)$,
that is a unique left eigenvector~$\pi$ with eigenvalue~$1$ and
positive coordinates summing up to~$1$. If the chain is ergodic, 
the eigenvalue~$1$ is the only
eigenvalue of $P$ with magnitude $1$.
We will denote by $\delta = \delta(P)$ the {\em eigenvalue gap} of $P$, that is $1- | \lambda|$,
where $\lambda$ is an eigenvalue with the second largest magnitude.
It follows that when $P$ is ergodic then $\delta > 0$.

The {\em time-reversed Markov chain} $P^*$ 
of $P$ is defined by the
equations $\pi_x p_{xy} = \pi_y p^*_{yx}$. 
The Markov chain $P$ is said to be {\em
reversible\/} if $P^* = P$. 
Reversible chains can be viewed as random walks on undirected graphs with weighted edges,
and in these chains the
probability of a transition from a state~$x$ to another state~$y$ in
the stationary distribution is the same as the probability of the
transition in the reverse direction.
The Markov chain $P$ is {\em symmetric} if $P = P^t$ where $P^t$
denotes the transposed matrix of $P$. 
The stationary distribution of symmetric chains is the uniform distribution.
They can be viewed as random walks on regular graphs, and
they are time-reversible.

We consider two search algorithms based on some ergodic and symmetric chain $P$.
{\bf Search Algorithm~2} repeatedly samples from approximately stationary distributions,
and checks if the element is marked.
To get a sample the Markov chain is simulated long enough to mix well.  
{\bf Search Algorithm~3} is a 
greedy variant: a check is performed after every step of the chain.

\encadre{ {\bf Search Algorithm~2}
\mbox{ } 
\begin{enumerate}\reduceitemm

\item Initialize~$x$ to a state sampled from the uniform
distribution over~$X$.

\item Repeat for~$t_2 = O(1/ \eps)$ steps
\begin{enumerate}\reduceitemm

\item 
If the element reached in the previous step is marked then output it and stop.

\item Simulate~$t_1 = O(1/ \delta)$ steps of $P$ starting with $x$.

\end{enumerate}

\end{enumerate}
}

\encadre{ {\bf Search Algorithm~3}
\mbox{ } 
\begin{enumerate}\reduceitem

\item Initialize~$x$ to a state sampled from the uniform
distribution over~$X$.

\item Repeat for~$t = O( 1/ \eps \delta )$ steps
\begin{enumerate}\reduceitemm

\item 
   If the element reached in the previous step is marked then output it and stop.

\item Simulate one step of $P$ starting with $x$.

\end{enumerate}

\end{enumerate}
} 

We state formally the complexity of the three algorithms to clarify
their differences.  
They will
maintain a data structure~$d$ that associates some
data~$d(x)$ with every state $x \in X$.  
Creating and maintaining the data structure incurs a certain cost, but
the data $d(x)$ can be helpful
to determine if $x \in M$.  
We distinguish three types of cost.
\begin{itemize}\reduceitem

\item[] \textbf{Setup cost $\mathsf{S}$:} The  cost to
  sample~$x \in X$ according to the uniform distribution, and to
  construct~$d(x)$.

\item[] \textbf{Update cost $\mathsf{U}$:} The cost to
  simulate a transition from~$x$ to~$y$ 
  according to~$P$, and to update $d(x)$ to~$d(y)$.

\item[] \textbf{Checking cost $\mathsf{C}$:} The
cost of checking if~$x \in M$ using~$d(x)$.

\end{itemize}
The cost may be thought of as a vector listing all the measures of
complexity of interest, such as query and time complexity.
The generic bounds on the efficiency of the three search
algorithms can be stated in terms of the cost parameters.

\begin{proposition}
\label{thm-classical-search}
Let $P$ be an ergodic and symmetric Markov chain on $X$.
Then all three algorithms find a marked element with high probability if there is any.
The respective costs incurred by the algorithms are of the following order:
\begin{enumerate}%\reduceitemm
\item {\bf Search Algorithm~1}:
$
% \frac{1}{\eps} \left(
(\mathsf{S}+\mathsf{C})/ \eps,
% \right).
$

\item {\bf Search Algorithm~2}:
$
\mathsf{S} + 
%\frac{1}{\eps} \left(
%\frac{1}{\delta}
(\mathsf{U} / \delta +\mathsf{C} ) / \eps,
%\right).
$

\item {\bf Search Algorithm~3}:
$
\mathsf{S} + 
%\frac{1}{\delta\eps} \left( 
(\mathsf{U}+\mathsf{C}) / \delta \eps .
%\right).
$

\end{enumerate}%\reduceitemm
\end{proposition}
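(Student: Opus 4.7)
The plan is to treat each of the three algorithms separately. For each, the cost bound is a straightforward accounting of the setup, update, and checking operations over the prescribed number of iterations; the substantive content is showing that the stated iteration counts suffice to locate an element of $M$ with constant success probability (a constant number of independent restarts then boosts this to high probability).

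For Search Algorithm~1, since $P$ is symmetric the stationary distribution is uniform and the samples are mutually independent. The probability that all $t = O(1/\eps)$ samples miss $M$ is at most $(1 - |M|/|X|)^t \le (1-\eps)^t = e^{-\Omega(1)}$, and each iteration incurs cost $\mathsf{S}+\mathsf{C}$, giving the bound $(\mathsf{S}+\mathsf{C})/\eps$.

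For Search Algorithm~2, I would invoke the spectral mixing bound: after $t_1 = O(1/\delta)$ steps of an ergodic chain with eigenvalue gap $\delta$, the total variation distance from the uniform stationary distribution drops below a small constant (suppressing a $\log|X|$ factor, in keeping with the ``somewhat simplified'' level of the survey). Hence the distribution of the sample inspected in each of the $t_2$ rounds is nearly uniform, so the sample is marked with probability at least $\eps/2$, and the probability of missing $M$ in all $t_2 = O(1/\eps)$ rounds is exponentially small. Summing the setup cost with $t_2$ rounds of $t_1$ updates and one check yields $\mathsf{S}+(\mathsf{U}/\delta+\mathsf{C})/\eps$.

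The main obstacle is Search Algorithm~3, where we cannot re-mix between consecutive checks and must therefore control the dependencies between successive samples. I would invoke the classical hitting-time estimate for reversible chains: if $\pi(M) \ge \eps$ and $P$ has eigenvalue gap $\delta$, then the expected time to hit $M$ starting from the stationary distribution is $O(1/(\delta\eps))$. A standard route is to bound the operator norm of the sub-stochastic matrix $P_{\overline M}$ obtained from $P$ by deleting the rows and columns indexed by $M$: a Dirichlet / Courant--Fischer argument combining the gap of $P$ with the relative weight $\eps$ of $M$ yields $\norm{P_{\overline M}} \le 1 - \Omega(\delta\eps)$, so the probability that the walk avoids $M$ for $t$ steps decays as $(1-\Omega(\delta\eps))^t$. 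Taking $t = \Theta(1/(\delta\eps))$ gives a constant lower bound on the success probability; since the initial state is drawn from the uniform (hence stationary) distribution, no correction for the starting distribution is needed. The total cost is $\mathsf{S} + t(\mathsf{U}+\mathsf{C}) = \mathsf{S}+(\mathsf{U}+\mathsf{C})/(\delta\eps)$, as claimed.
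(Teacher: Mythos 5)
The paper states this proposition without proof (it is quoted as a standard fact about classical search via Markov chains), so there is no in-paper argument to compare against; judged on its own, your write-up is correct and is the standard way to establish these bounds. The cost accounting for all three algorithms is right, and your key lemma for Algorithm~3 --- that the substochastic restriction $P_{\overline M}$ of a symmetric chain with gap $\delta$ and $|M|/|X|\geq\eps$ satisfies $\opnorm{P_{\overline M}}\leq 1-\delta\eps$ --- is exactly what is needed and follows by the Courant--Fischer argument you sketch: for $f$ supported on $\overline M$, decompose $f$ along the uniform vector, use $\langle f,Pf\rangle\leq a^2+(1-\delta)(\norm{f}^2-a^2)$ with $a=\langle f,\mathbf{1}\rangle/\sqrt{|X|}$, and bound $a^2\leq(1-\eps)\norm{f}^2$ by Cauchy--Schwarz; starting from the stationary distribution then gives escape probability at most $(1-\delta\eps)^t$. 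The one caveat you rightly flag yourself is in Algorithm~2: $O(1/\delta)$ steps of the chain bring the distribution within constant total variation of uniform only up to a $\log(1/\pi_{\min})=\log|X|$ factor, so the stated bound is really $O(\tfrac{1}{\delta}\log|X|)$ per re-mixing; this imprecision is inherited from the survey's own description of the algorithm ("simulate $t_1=O(1/\delta)$ steps ... to mix well") rather than a defect of your argument, but if you wanted a fully rigorous version you would either carry the logarithmic factor or redefine $1/\delta$ as the mixing time. With that understood, your proof is complete.
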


The generic bound of $O(1/ \delta \eps)$ in {\bf Search Algorithm~3} on the hitting time
is not always optimal, which in some cases, for example 
in the 2-dimensional grid, can be significantly smaller. 

\section{Quantum analogue of a classical Markov chain}
\label{sec-quantum-mc}
We define a quantum analogue of an arbitrary irreducible Markov
chain~$P$ as it is given by Magniez et al.~\cite{MNRS}. This definition is
based on and slightly extends the concept of quantum Markov chain
due to Szegedy~\cite{Szegedy04}. The latter was inspired by an earlier
notion of quantum walk due to Ambainis~\cite{Ambainis04}. We also
point out that a similar process on regular graphs was studied by
Watrous~\cite{Watrous01}.

The quantum walk may be thought of as a walk on the {\em edges\/} of the
original Markov chain, rather than on its vertices. Thus, its state space
is a vector subspace of~$\aitch = \complex^{X \times X}
\isomorphic \complex^X \tensor \complex^X$.
For a state~$\ket{\psi} \in \aitch$,
let~$\Pi_\psi = \ket{\psi}\bra{\psi}$ denote the orthogonal projector
onto~$\Span(\ket{\psi})$, and let $\reflex(\psi)= 2 \Pi_\psi - \id $
denote the reflection through the line generated by $\ket{\psi}$,
where~$\id$ is the identity operator on~${\aitch}$.
If $\kay$ is a subspace of $\aitch$ spanned by a set of mutually orthogonal
states $\{\ket{\psi_i} : i \in I\}$, then let
$\Pi_{\kay} = \sum_{i \in I} \Pi_{\psi_i} $ be the orthogonal projector
onto $\kay$, and let 
$\reflex(\kay) = 2 \Pi_{\kay} - \id $
be the reflection through $\kay$.
Let $\ay=\Span( \ket{x}\ket{p_x} : x\in X)$ and 
$\bee=\Span( \ket{p^*_y}\ket{y} : y\in X)$ be vector subspaces
of~$\aitch$, 
where
\centermath{$
\ket{p_x} =  \sum_{y\in X} \sqrt{p_{xy}} \, \ket{y} \quad\textrm{and} \quad
\ket{p^*_y}  = \sum_{x\in X} \sqrt{p^*_{yx}} \, \ket{x}.
$}

\begin{definition}[Quantum walk]
The unitary operation $W(P)  =  \reflex(\bee) \cdot \reflex(\ay)$ defined on $\aitch$ by 
is called the {\em quantum walk\/} based on the classical chain~$P$.
\end{definition}

Let us give some motivations for this definition. Classical random walks do not quantize
in the space of the vertices. The standard way advocated by several 
papers (see the survey~\cite{Kem2}) is to extend
the vertex space $X$ by a coin space $C$, and define the state space of the walk as $X \times C$.
Then a step of the walk is defined as the product of two unitary operations. 
The first one is the {\em flip} operation $F$  controlled by the vertex state,
which means that  for every $x \in X$, it performs a unitary coin flip $F^x$ on the
states $\{ \ket{x,c} : c \in C \}$. 
For $d$-regular undirected graphs, $C$ can be taken as the set $\{ 1, \ldots , d \}$,
and in that case the coin flip $F^x$  is independent from $x$.
The second one is the {\em shift} operation $S$ which is controlled by the coin state, and takes a
vertex to one if its neighboring vertices. 
For $d$-regular graphs the simplest way to define it is via a labeling of the directed edges
by the numbers between 1 and $d$ such that for every $1 \leq i \leq d$, the directed edges labeled
by $i$ form a permutation. Then,
if the coin state is $i$, the new vertex is 
the $i^{\rm th}$ neighbor according  to the labeling.
For general walks it is practical to take the coin space also to be $X$,
then the state space of the walk corresponds naturally to the directed edges of the graph.
In this case there is a symmetry between the two spaces, and
the shift operation simply exchanges the vertices, that is 
$S \ket{x,y} = \ket{y,x}$, for every $x,y \in X$. 

Let us pause here for a second and consider how a classical walk
defined by some Markov chain $P$ can be thought of
as a walk on the directed edges of the graph (instead of the vertices). Let's think about an
edge $(x,u)$ as the state of the walk $P$ being at $x$, where the previous state was $u$.
According to  this interpretation, 
in one step the walk on edges should move 
from state $(x,u)$ to state $(y,x)$ with probability $p_{xy}$. This move can be
accomplished by the stochastic flip operation $F$ 
controlled by the left end-point of the edge, where  
$F^x_{uy} = p_{xy}$ for all $x,u,y \in X$, followed by the shift $S$
defined previously.
If we define the flip $F'$ as $F$ but controlled by the right end-point of the edge, then
it is not hard to see that $SFSF = F'F$.
Therefore one can get rid of
the shift operations, and two steps of the walk can be accomplished by
two successive flips where the control and the
target registers alternate.
 
Coming back to the quantization of classical walks, we thus want to find
unitary coin flips which mirror the walk $P$, and which alternately mix the 
right end-point of
the edges over the neighbors of the left end-point, and then the
left end-point of the edges over the neighbors of the new right
end-point. 
The reflections  $ \reflex(\ay) $ and $ \reflex(\bee)$ are natural choices for that.
They are also generalizations of the Grover diffusion operator~\cite{Grover96}.
Indeed, when the
transition to each neighbor is equally likely, they
correspond exactly to Grover diffusion. In Szegedy's 
original definition 
the alternating reflections were 
$ \reflex(\ay) $ and $ \reflex(\beee)$ with
$\beee=\Span( \ket{p_y}\ket{y} : y\in X)$, mirroring faithfully
the classical edge based walk. The reason why 
the MNRS quantization chooses every second step
a reflection based on the reversed walk $P^*$ is explained now.

The eigen-spectrum of the transition matrix~$P$ plays an important
role in the analysis of a classical Markov chain. Similarly, the
behaviour of the quantum process~$W(P)$ may be inferred from its
spectral decomposition.
The reflections through subspaces~$\ay$ and $\bee$
are (real) orthogonal transformations, and so is their product~$W(P).$ 
An orthogonal matrix may be
decomposed into a direct sum of the identity, reflection through the
origin,
and two-dimensional rotations over orthogonal
vector subspaces~\cite[Section 81]{Halmos74}. These %orthogonal
subspaces and the corresponding eigenvalues are revealed by the
singular value decomposition of the product~$\Pi_\ay \Pi_\bee$ of the
orthogonal projection operators onto the 
subspaces~$\ay$ and $\bee$. Equivalently, as done by Szegedy, one can
consider the singular values of the {\em
discriminant} matrix~$D(P)=(\sqrt{p_{xy}p^*_{yx}})$. Since 
$\sqrt{p_{xy}p^*_{yx}} = \sqrt{\pi_x} p_{xy}/ \sqrt{\pi_y}$,
we have 
\centermath{
$D(P) \quad = 
\quad \diag(\pi)^{1/2} \cdot P \cdot \diag(\pi)^{-1/2},$
}
where~$\diag(\pi)$ is the invertible diagonal matrix with the
coordinates of the distribution~$\pi$ in its diagonal.  
Therefore $D(P)$ and $P$ are similar, and their spectra are the same.
When $P$ is reversible then $D(P)$ is symmetric, and its singular values
are equal to the absolute values of its eigenvalues.
Thus, in that case we only have to study the spectrum of $P$.

Since the singular values of~$D(P)$
all lie in the range~$[0,1]$,
they can be expressed 
as~$\cos\theta$, for some angles~$\theta \in
[0,\tfrac{\pi}{2}]$.
The following theorem of Szegedy 
relates the
singular value decomposition of~$D(P)$ to the spectral decomposition
of~$W(P)$.
\begin{theorem}[Szegedy~\cite{Szegedy04}]
\label{thm-spectrum}
Let $P$ be an irreducible Markov chain, and
let $\cos\theta_1,\ldots,\cos\theta_l$ be an enumeration of those
singular values (possibly repeated) of $D(P)$ that lie in the open
interval~$(0,1)$.  Then
the exact description of the spectrum of $W(P)$ on $\ay+\bee$ is:
\begin{enumerate}\reduceitemm
\item On $\ay + \bee$ those eigenvalues of $W(P)$
that have non-zero imaginary part are exactly $e^{\pm 2 \complexi
\theta_1},\ldots,e^{\pm 2 \complexi \theta_l}$, with the same
multiplicity.

\item On $\ay\cap\bee$ the operator $W(P)$ acts as the
identity~$\id$. $\ay\cap\bee$ is spanned by the left
(and right) singular vectors of $D(P)$ with singular value~$1$.
\comment{In other words, the dimension of $\ay\cap\bee$
is the multiplicity of the singular value $1$ in $D(P)$.}

\item On $\ay \cap \bee^\perp$ and
$\ay^\perp\cap\bee$ the operator $W(P)$ acts as $-\id$.
$\ay\cap\bee^\perp$ (respectively,
$\ay^\perp\cap\bee$) is spanned by the left
(respectively, right) singular vectors of $D(P)$ with singular
value~$0$.

\end{enumerate}\reduceitemm
\end{theorem}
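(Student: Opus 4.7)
The strategy is to reduce $W(P) = \reflex(\bee)\cdot\reflex(\ay)$ to a direct sum of at most two-dimensional invariant pieces, one per singular value of $D(P)$, and to diagonalize on each piece directly.

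First, I would introduce the isometries $A,B:\C^X\to\aitch$ defined by $A\ket{x}=\ket{x}\ket{p_x}$ and $B\ket{y}=\ket{p^*_y}\ket{y}$. Their images are exactly $\ay$ and $\bee$, so $\Pi_\ay=AA^\adjoint$ and $\Pi_\bee=BB^\adjoint$. A short entry-wise calculation gives $\bra{y}B^\adjoint A\ket{x}=\sqrt{p_{xy}\,p^*_{yx}}$, so $B^\adjoint A=D(P)^t$, which shares its singular values with $D(P)$. Fix a singular value decomposition $B^\adjoint A=\sum_i\cos\theta_i\,\ket{v_i}\bra{u_i}$ with $\theta_i\in[0,\tfrac{\pi}{2}]$, and set $\ket{\alpha_i}=A\ket{u_i}\in\ay$ and $\ket{\beta_i}=B\ket{v_i}\in\bee$. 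Since $A$ and $B$ are isometries, $\{\ket{\alpha_i}\}$ and $\{\ket{\beta_i}\}$ are orthonormal bases of $\ay$ and $\bee$, and one reads off the biorthogonality relation $\braket{\beta_j}{\alpha_i}=\cos\theta_i\,\delta_{ij}$. The subspaces $\mathcal{V}_i=\Span(\ket{\alpha_i},\ket{\beta_i})$ are therefore pairwise orthogonal, giving $\ay+\bee=\bigoplus_i\mathcal{V}_i$.

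Next I would show that each $\mathcal{V}_i$ is invariant under both $\reflex(\ay)$ and $\reflex(\bee)$, and split into three cases. If $\cos\theta_i=1$, then $\ket{\alpha_i}=\ket{\beta_i}$ lies in $\ay\cap\bee$ and is fixed by $W(P)$; the collection of such vectors is precisely the image under $A$ of the singular-value-$1$ eigenvectors, which gives part~2. If $\cos\theta_i=0$, biorthogonality says $\ket{\alpha_i}$ is orthogonal to every $\ket{\beta_j}$, so $\ket{\alpha_i}\in\ay\cap\bee^\perp$ and $\reflex(\bee)\ket{\alpha_i}=-\ket{\alpha_i}$, hence $W(P)\ket{\alpha_i}=-\ket{\alpha_i}$; symmetrically $\ket{\beta_i}\in\ay^\perp\cap\bee$ with $W(P)\ket{\beta_i}=-\ket{\beta_i}$. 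This yields part~3.

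The substantive case, and the main obstacle, is $0<\cos\theta_i<1$: here $\mathcal{V}_i$ is genuinely two-dimensional, and $\mathcal{V}_i\cap\ay$, $\mathcal{V}_i\cap\bee$ are two distinct lines meeting at angle $\theta_i$. To make the eigenvalue computation transparent, I would orthonormalize $\mathcal{V}_i$ via $\ket{\beta_i^\perp}=(\ket{\beta_i}-\cos\theta_i\,\ket{\alpha_i})/\sin\theta_i$. Using $\Pi_\ay\ket{\beta_i}=\cos\theta_i\,\ket{\alpha_i}$, which follows from biorthogonality since $\{\ket{\alpha_j}\}$ is an orthonormal basis of $\ay$, one writes the $2\times 2$ matrices of $\reflex(\ay)$ and $\reflex(\bee)$ in the basis $(\ket{\alpha_i},\ket{\beta_i^\perp})$; both are Householder reflections through lines making angle $\theta_i$, and their product is the planar rotation by $2\theta_i$, whose eigenvalues are $e^{\pm 2\complexi\theta_i}$. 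Taking the direct sum over $i$ assembles part~1 and completes the argument.
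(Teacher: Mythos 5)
The paper states this theorem without proof, citing Szegedy; the only hint it offers is that the invariant subspaces and eigenvalues ``are revealed by the singular value decomposition of $\Pi_\ay\Pi_\bee$'', equivalently of the discriminant matrix. Your argument is precisely that route --- the standard two-reflections decomposition via the isometries $A,B$ with $B^\adjoint A = D(P)^t$, followed by diagonalizing the product of two Householder reflections on each two-dimensional block --- and it is correct, including the computation giving the rotation by $2\theta_i$. The one step you leave implicit is the reverse inclusion in parts 2 and 3 (that $\ay\cap\bee$ contains nothing beyond the singular-value-$1$ vectors, and likewise for $\ay\cap\bee^\perp$ and $\ay^\perp\cap\bee$); this follows in one line from your orthogonal decomposition $\ay+\bee=\bigoplus_i\mathcal{V}_i$, since a vector in $\ay\cap\bee$ has its $\mathcal{V}_i$-component proportional to both $\ket{\alpha_i}$ and $\ket{\beta_i}$, which forces that component to vanish whenever $\cos\theta_i<1$.
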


Let us now suppose in addition that $P$ is ergodic and reversible.
As we just said,
reversibility implies that the singular values of $D(P)$ 
are equal to the 
absolute values of the eigenvalues of $P$. From the ergodicity it also follows that
$D(P)$ has a unique
singular vector with singular value 1. We have  therefore the following corollary.

\begin{corollary}
\label{cor-spectrum}
Let $P$ be an ergodic and reversible Markov chain.
Then, on $\ay+\bee$ the spectrum of $W(P)$ can be characterized as:
$$\ket{\pi}   =  \sum_{x \in X} \sqrt{\pi_x} \, \ket{x}\ket{p_x} =
\sum_{y \in X} \sqrt{\pi_y} \, \ket{p_y^*}\ket{y}
$$
is the unique 1-eigenvector, 
$e^{\pm 2 \complexi \theta}$ are eigenvalues
for every singular value
$\cos \theta \in (0,1)$ of $D(P)$, 
and all the remaining eigenvalues are -1.

\end{corollary}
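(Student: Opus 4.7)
The plan is to derive the corollary directly from Theorem~\ref{thm-spectrum} once the singular-value structure of $D(P)$ has been pinned down under the added hypotheses of reversibility and ergodicity. First, reversibility means $p^*_{yx}=p_{yx}$, so $D(P)_{xy}=\sqrt{p_{xy}p_{yx}}$ is symmetric and its singular values are the absolute values of its eigenvalues. The similarity $D(P)=\diag(\pi)^{1/2}\,P\,\diag(\pi)^{-1/2}$ recorded in the excerpt then shows that these eigenvalues coincide with the eigenvalues of $P$. Perron--Frobenius together with aperiodicity ensure that $1$ is a simple eigenvalue of $P$ and that every other eigenvalue has magnitude strictly less than $1$. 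Consequently the singular value $1$ of $D(P)$ is simple, and every remaining singular value is either $0$ or lies strictly between $0$ and $1$.

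Next I would exhibit the corresponding singular vector. Since $D(P)$ is symmetric it is enough to produce one eigenvector for eigenvalue $1$; a short computation using $\pi_xp_{xy}=\pi_yp_{yx}$ shows that $v_x=\sqrt{\pi_x}$ satisfies $D(P)v=v$. By part~2 of Theorem~\ref{thm-spectrum} the one-dimensional subspace $\ay\cap\bee$ is generated simultaneously by the image of $v$ in $\ay$, namely $\sum_x\sqrt{\pi_x}\,\ket{x}\ket{p_x}$, and by the image of $v$ in $\bee$, namely $\sum_y\sqrt{\pi_y}\,\ket{p^*_y}\ket{y}$. These two expressions must coincide, and indeed expanding each in the computational basis and applying the time-reversal identity $\pi_xp_{xy}=\pi_yp^*_{yx}$ displays the common state $\sum_{x,y}\sqrt{\pi_xp_{xy}}\,\ket{x}\ket{y}=\ket{\pi}$. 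Uniqueness of $\ket{\pi}$ as a $1$-eigenvector on $\ay+\bee$ is then automatic because every other eigenvalue supplied by the theorem is different from $1$.

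The remaining claims follow immediately from the theorem: part~1 yields the pairs $e^{\pm 2\complexi\theta}$ for each singular value $\cos\theta\in(0,1)$ of $D(P)$, and part~3 forces $W(P)$ to act as $-\id$ on $(\ay\cap\bee^\perp)\oplus(\ay^\perp\cap\bee)$, which together with $\ay\cap\bee$ and the two-dimensional rotation subspaces exhausts $\ay+\bee$. The only genuinely substantive step is the one in the first paragraph: reversibility is what collapses the singular values of $D(P)$ onto absolute eigenvalues of $P$, and together with ergodicity ensures that no additional singular vectors crowd the value $1$. Once that is in place the corollary is essentially a rereading of Szegedy's theorem.
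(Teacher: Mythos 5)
Your proposal is correct and follows essentially the same route as the paper: reversibility makes $D(P)$ symmetric so its singular values are the absolute values of its eigenvalues, the similarity with $P$ transfers the Perron--Frobenius/ergodicity conclusion that the singular value $1$ is simple, and the rest is a direct reading of Theorem~\ref{thm-spectrum}. Your explicit verification that $v_x=\sqrt{\pi_x}$ is the singular vector and that its two images in $\ay$ and $\bee$ both equal $\ket{\pi}$ is a welcome detail the paper leaves implicit.
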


The {\em phase gap} $\Delta(P) = \Delta$ of $W(P)$ is 
defined as $2\theta$, where $\theta$ is the smallest
angle in $ (0,\tfrac{\pi}{2}]$ such that $\cos \theta$ is a singular value of $D(P)$.
This definition is motivated by the previous theorem and corollary:
the angular distance of $1$ from any other eigenvalue 
of $W(P)$ on $\ay+\bee$ is at least $\Delta$.
When $P$ is ergodic and reversible, there is a 
quadratic relationship  between the phase gap~$\Delta$ of the
quantum walk~$W(P)$ and the eigenvalue gap~$\delta$ of the classical Markov chain~$P$,
more precisely
$\Delta \geq 2 \sqrt{\delta}$.
Indeed, let $\delta \in (0,\tfrac{\pi}{2}]$ such that 
$\delta = 1- \cos \theta$ and $\Delta = 2 \theta$.
The following (in)equalities can easily be checked:
%\begin{equation}
$ \Delta
     \geq  \abs{1 - \e^{2 \complexi \theta }}
% = \sqrt{2(1- \cos 2\theta)} %\nonumber \\
 = 2 \sqrt{1- \cos^2 \theta} %\nonumber \\
%     =  2 \sqrt{1 - \abs{\lambda_1}^2} %\nonumber \\
%\label{eqn-gap}
    \geq  2\sqrt{\delta}.
$
%\end{equation}
The origin of the quadratic speed-up due to quantum walks may be
traced to this phenomenon.

\section{Quantum search algorithms}
\label{sec-q-analogue}
As in the classical case, the quantum search algorithms look for a marked element 
in a finite set $X$. We suppose that the elements of $X$ are coded by binary strings and that
$\bar{0}$, the everywhere 0 string is in $X$.
A data structure attached to both vertex registers is maintained during the algorithm.
Again, three types of cost will be distinguished, generalizing those of the classical search.
In all quantum
search algorithms the overall complexity 
is of the order of these specific costs, which justifies their choices.
The operations not
involving manipulations of the data will be charged 
at unit cost.
For the sake of simplicity,  we do not  formally include the data
into the description of the unitary operations defining the costs.
The initial state of the algorithm is explicitly related to the stationary
distribution $\pi$ of $P$.

\begin{itemize}\reduceitem

\item[] \textbf{(Quantum) Setup cost $\mathsf{S}$:} The cost
for constructing the state $\sum_{x \in X}
\sqrt{\pi_x}\ket{x}\ancilla$ with data.

\item[] \textbf{(Quantum) Update cost $\mathsf{U}$:} The cost to
realize any of the unitary transformations and inverses with data
\smallskip\\
\centerline{$\ket{x} \ancilla
    \quad \mapsto \quad \ket{x}
                        \sum_{y \in X}\sqrt{p_{xy}}\ket{y},$}
\centerline{$\ancilla \ket{y}
    \quad \mapsto \quad 
                        \sum_{x \in X}\sqrt{p^*_{yx}}\ket{x}\ket{y}.$}
\smallskip
\item[] \textbf{(Quantum) Checking cost $\mathsf{C}$:} 
The cost to realize the unitary transformation with data, that
maps $\ket{x}\ket{y}$ to $-\ket{x}\ket{y}$ if $x\in M$,
and leaves it unchanged otherwise.
\end{itemize}
In the checking cost we could have included the  cost of the unitary transformation
which realizes a phase flip also when $y \in M$, our choice was made just for simplicity.
Observe that the quantum walk $W(P)$ with data can be implemented
at cost $4\mathsf{U}+2$.
Indeed,
the reflection~$\reflex(\ay)$ is implemented 
by mapping states~$\ket{x}\ket{p_x}$ to~$\ket{x}\ancilla$, 
applying 
$\reflex ( \complex^X\otimes\ancilla)$,
and undoing the first transformation.
In our accounting we charge unit cost for the second step since it does not depend
on the database. Therefore the implementation of $\reflex(\ay)$ is of cost
$2\mathsf{U}+1$.
The reflection $\reflex(\bee)$ may be implemented similarly.

Let us now describe how the respective algorithms of Grover, Ambainis and Szegedy are
related to the classical search algorithms of Section~\ref{classical}.
We suppose that $\eps$, a lower bound on the proportion of marked 
elements is known in advance,
though the results remain true even if it is not the case.
Grover search (which we discuss soon in detail) is the quantum analogue of 
\textbf{Search Algorithm~1}.
\begin{theorem}[Grover~\cite{Grover96}]
\label{thm-grover}
There exists a quantum algorithm which with high probability finds a marked element,
if there is any, at cost of order $\frac{\mathsf{S} + \mathsf{C}}{\sqrt{\eps}}$.
\end{theorem}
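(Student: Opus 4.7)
The plan is to quantize Search Algorithm~1 via amplitude amplification. The initial state is the uniform superposition with attached data,
\[
\ket{u} \;=\; \frac{1}{\sqrt{|X|}} \sum_{x \in X} \ket{x}\ancilla,
\]
obtained at cost $\mathsf{S}$ by applying a setup unitary $V$ to a fixed starting state $\ket{s}$. Let $\kay_g$ be the marked subspace spanned by the basis states of $\ket{u}$ whose vertex register lies in $M$, and let $\ket{g}, \ket{b}$ be the normalised projections of $\ket{u}$ onto $\kay_g$ and onto its orthogonal complement. Writing $\ket{u} = \sin\alpha \, \ket{g} + \cos\alpha \, \ket{b}$, one finds $\sin\alpha = \sqrt{|M|/|X|} \geq \sqrt{\eps}$ whenever $M$ is non-empty.

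The algorithm is the standard Grover iteration $G = \reflex(u) \cdot O$, where $O$ is the phase-flip oracle on marked elements (cost $\mathsf{C}$) and $\reflex(u) = 2 \proj{u} - \id$ is the diffusion through $\ket{u}$. Within the two-dimensional invariant subspace $\Span(\ket{g}, \ket{b})$, $G$ acts as a rotation by angle $2\alpha$, so after $k$ iterations the state is $\sin((2k+1)\alpha) \, \ket{g} + \cos((2k+1)\alpha) \, \ket{b}$. Choosing $k = \lfloor \pi/(4\alpha) \rfloor$ brings the amplitude on $\kay_g$ to $\Theta(1)$, so measurement in the computational basis returns a marked element with constant probability.

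For the cost analysis, by the same trick used for $\reflex(\ay)$ earlier in the paper, $\reflex(u)$ is implemented as $V \, \reflex(s) \, V^{-1}$ at cost $2\mathsf{S} + O(1)$, the middle reflection through the fixed basis state $\ket{s}$ being data-independent and hence charged only a unit. Thus each Grover iteration costs $O(\mathsf{S} + \mathsf{C})$, and with $k = O(1/\sqrt{\eps})$ the overall cost is $O((\mathsf{S} + \mathsf{C})/\sqrt{\eps})$, as claimed.

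The main subtlety, and the step I expect to require the most care, is that $\alpha$ is known only through the lower bound $\sin\alpha \geq \sqrt{\eps}$: a fixed $k$ may overshoot if $|M|/|X|$ happens to be much larger than $\eps$. This is handled by the standard device of running the procedure with an exponentially increasing schedule $k = 2^0, 2^1, \ldots, 2^{\lceil \log(1/\sqrt{\eps}) \rceil}$, or equivalently by drawing $k$ uniformly at random from an interval of length $\Theta(1/\sqrt{\eps})$; either variant achieves constant success probability while inflating the total cost by only a constant factor, completing the argument.
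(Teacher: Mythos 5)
Your proof is correct and follows essentially the same route as the paper, which only sketches this argument in Section 5: the standard two-dimensional rotation analysis of amplitude amplification, there phrased as the iteration of $\reflex(\pi)\cdot\reflex(\mu^\perp)$, a rotation by $2\varphi$ with $\sin\varphi=\sqrt{\eps}$. Your cost accounting, implementing the diffusion as $V\,\reflex(s)\,V^{-1}$ at cost $2\mathsf{S}+O(1)$ per iteration, also correctly explains why $\mathsf{S}$ sits inside the $1/\sqrt{\eps}$ factor here, unlike in the walk-based theorems.
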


In the original application of Grover's result to unordered search there is no data structure
involved, therefore $\mathsf{S} + \mathsf{C} = O(1)$, and the cost is of order $\frac{1}{\sqrt{\eps}}$.

The algorithm of Ambainis is the quantum analogue of \textbf{Search Algorithm~2}
in the special case of the walk on the Johnson graph and for some specific marked sets.
Let us recall that
for $0 < r \leq n/2$, the vertices of the Johnson graph $J(n,r)$ are 
the subsets of $[n]$ of size $r$, and there is an edge between
two vertices if the  size of  their symmetric difference is 2. In other words, 
two vertices are adjacent if by deleting an element from the first one and adding a new element
to it we get the second vertex.
The eigenvalue gap $\delta$ of the symmetric walk on $J(n,r)$ is $n/r(n-r) = \Theta(1/r)$. If the set
of marked vertices in $J(n,r)$ is either empty, or it consists of 
vertices that contain a fixed subset of 
constant size~$k \leq r$ then $\eps = \Omega(\frac{r^k}{n^k})$.

\begin{theorem}[Ambainis~\cite{Ambainis04}]
\label{thm-ambainis}
Let~$P$ be the random walk on the Johnson graph 
$J(n,r)$ where~$r = \order(n)$, and
let~$M$ be either empty,
or the class of vertices that contain a fixed subset of 
constant size~$k \leq r$. 
Then there is a quantum algorithm that finds,
with high probability, 
the~$k$-subset if $M$ is not empty at cost of
order
$
\mathsf{S} + \frac{1}{\sqrt{\eps}} ( \frac{1}{\sqrt{\delta}}
\mathsf{U}+\mathsf{C}).
$
\end{theorem}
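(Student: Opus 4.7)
My plan is to realise a quantum version of \textbf{Search Algorithm~2} by using amplitude amplification as the outer loop and a phase-estimation based approximate reflection through the stationary state as the inner ``mixing'' step. Since $J(n,r)$ is regular the stationary distribution $\pi$ is uniform, so the state
\[
\ket{\pi} \;=\; \sum_{x \in X} \sqrt{\pi_x}\, \ket{x}\ket{p_x}
\]
from Corollary~\ref{cor-spectrum} has squared overlap $|M|/|X| \geq \eps$ with the ``marked subspace'' $\emm = \Span(\ket{x}\ket{p_x} : x \in M)$. Preparing $\ket{\pi}$ from $\ancilla\ancilla$ costs one Setup followed by one Update, hence $O(\mathsf{S}+\mathsf{U})$, which is absorbed in the first term of the bound.

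The core of the algorithm is then the Grover-like iterate $\reflex(\ket{\pi}) \cdot \reflex(\emm)$, which by the standard amplitude amplification analysis rotates $\ket{\pi}$ into $\emm$ after $O(1/\sqrt{\eps})$ iterations; a final measurement in the computational basis yields a marked element with constant probability. The reflection $\reflex(\emm)$ is implemented directly from the checking oracle at cost $O(\mathsf{C})$. The reflection $\reflex(\ket{\pi})$ cannot be performed exactly, and this is where the quantum walk $W(P)$ enters: by Theorem~\ref{thm-spectrum} and Corollary~\ref{cor-spectrum}, $\ket{\pi}$ is the unique $1$-eigenvector of $W(P)$ on $\ay+\bee$, and every other eigenvalue there is separated from $1$ by an angle at least $\Delta \geq 2\sqrt{\delta}$. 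Phase estimation on $W(P)$ with precision $\Theta(\Delta)$, requiring $O(1/\Delta)=O(1/\sqrt{\delta})$ applications of $W(P)$, therefore distinguishes $\ket{\pi}$ from its orthogonal complement; flipping the sign conditioned on a nonzero estimated phase realises an approximate reflection through $\ket{\pi}$ at cost $O(\mathsf{U}/\sqrt{\delta})$, using the fact from Section~\ref{sec-q-analogue} that one step of $W(P)$ costs $O(\mathsf{U})$.

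Composing the two reflections and iterating $O(1/\sqrt{\eps})$ times gives the total cost
\[
\mathsf{S} \;+\; O\!\left(\tfrac{1}{\sqrt{\eps}}\right)\!\left(\tfrac{\mathsf{U}}{\sqrt{\delta}} + \mathsf{C}\right),
\]
as claimed. The main obstacle is error control: each approximate reflection differs from the exact one in spectral norm, and these errors could in principle accumulate coherently across the $O(1/\sqrt{\eps})$ iterations. The standard remedy is to choose the phase-estimation precision slightly finer than $\Delta$ so that each approximation error is $O(\sqrt{\eps})$ and the cumulative error stays a constant; this costs only a logarithmic overhead, hidden in the order bound. A secondary subtlety is that $W(P)$ acts as the identity on $(\ay+\bee)^\perp$ and hence phase estimation cannot distinguish vectors there from $\ket{\pi}$, but since $\ket{\pi}\in\ay$ and $\emm\subseteq\ay$, the iteration remains in $\ay+\bee$ throughout and this pathology is harmless.
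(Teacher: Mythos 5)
Your overall architecture --- amplitude amplification in $\Span(\ket{\pi},\ket{\mu})$ with $\reflex(\emm)$ implemented from the checking oracle and $\reflex(\pi)$ implemented approximately by phase estimation on $W(P)$ --- is exactly the MNRS route that the survey develops in Section~\ref{sec-walk-search}, and most of it is sound: the overlap computation $\norm{\Pi_\emm\ket{\pi}}^2\geq\eps$ for the uniform stationary distribution, the cost $O(\mathsf{U})$ per walk step, and the observation that the identity action of $W(P)$ on $(\ay+\bee)^\perp$ is harmless because the dynamics stays (approximately) inside $\ay+\bee$. The genuine gap is your closing claim that the error-control overhead is ``only a logarithmic overhead, hidden in the order bound.'' It is not hidden. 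To make each approximate reflection accurate to $O(\sqrt{\eps})$ you must run phase estimation with $k=\Theta(\log(1/\sqrt{\eps}))$ bits of precision, hence $O(k/\Delta)$ controlled-walk steps per reflection (Theorem~\ref{thm-diffusion1}); over $O(1/\sqrt{\eps})$ iterations the update term becomes $\frac{1}{\sqrt{\eps}}\,\frac{1}{\sqrt{\delta}}\log\frac{1}{\sqrt{\eps}}\,\mathsf{U}$, whereas the theorem you are proving has no such factor --- for the Johnson graph with $\eps=\Omega((r/n)^k)$ this is a genuine extra $\Theta(\log n)$. The survey is explicit on this point: the simple procedure $\textbf{Quantum Search}(P)$ only achieves the bound with the logarithmic factor, and eliminating it requires the recursive amplitude amplification of H{\o}yer et al.\ as adapted in~\cite{MNRS}, an ingredient your argument does not supply.

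There are two standard repairs. One is to add that recursive amplitude amplification step, which removes the $\log\frac{1}{\sqrt{\eps}}$ factor and proves the more general Theorem~\ref{thm-with-log-factor}, of which the present statement is a special case. The other is Ambainis's original argument, which is what the citation in the theorem actually refers to: a direct computation of the spectrum of the walk on $J(n,r)$ restricted to the invariant subspaces determined by the intersection pattern with the fixed $k$-subset, showing that $\Theta(1/\sqrt{\delta})$ raw walk steps interleaved with phase flips already effect the required rotation --- no phase estimation, hence no precision parameter to pay for. As written, your proof establishes only the weaker bound $\mathsf{S}+\frac{1}{\sqrt{\eps}}\bigl(\frac{1}{\sqrt{\delta}}\log\frac{1}{\sqrt{\eps}}\,\mathsf{U}+\mathsf{C}\bigr)$.
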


Szegedy's algorithm is the quantum analogue of \textbf{Search Algorithm~3} for the class of
ergodic and symmetric Markov chains. His algorithm is therefore more general than 
the one of Ambainis
with respect to the class of Markov chains and marked sets it can deal with. 
Nonetheless, the approach of Ambainis has its own advantages: it is of smaller cost when
$\mathsf{C}$ is substantially greater than $\mathsf{U}$, and it also finds a marked element.

\begin{theorem}[Szegedy~\cite{Szegedy04}
] 
\label{thm-szegedy}
Let $P$ be an ergodic and symmetric Markov chain.
\suppress{
Assume the hypotheses of
Proposition~\ref{thm-classical-search}.  
}
There exists a quantum
algorithm that determines, with high probability, if $M$ is non-empty at cost of order
$
\mathsf{S} + \frac{1}{\sqrt{\delta\eps}} (\mathsf{U}+\mathsf{C}).
$
\end{theorem}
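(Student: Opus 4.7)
The plan is to reduce to quantum amplitude amplification on the stationary superposition $\ket{\pi}$, with the quantum walk $W(P)$ supplying an approximate reflection through $\ket{\pi}$. Since $P$ is symmetric, $\pi$ is uniform and the initial state $\ket{\pi} = \sum_x\sqrt{\pi_x}\,\ket{x}\ket{p_x}$ can be prepared at cost $\mathsf{S} + O(\mathsf{U})$: run the setup to obtain $\sum_x\sqrt{\pi_x}\,\ket{x}\ancilla$ and apply one update operation to extend the second register to $\ket{p_x}$. Its projection onto the marked subspace $\emm = \Span\{\ket{x}\ket{y} : x \in M\}$ has squared norm $\pi(M) \geq \eps$ whenever $M$ is non-empty, so $O(1/\sqrt{\eps})$ rounds of amplitude amplification suffice to rotate $\ket{\pi}$ close to $\emm$, after which a measurement with the checking oracle distinguishes the two cases with constant probability.

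Each amplification round requires two reflections. The reflection through $\emm$ is exactly the checking oracle, at cost $\mathsf{C}$. The reflection through $\ket{\pi}$ is the delicate ingredient: I would implement an approximate version via phase estimation on $W(P)$, exploiting Corollary~\ref{cor-spectrum}, which says that inside $\ay+\bee$ the state $\ket{\pi}$ is the \emph{unique} $+1$-eigenvector of $W(P)$, while every other eigenvalue $e^{\pm 2\complexi\theta}$ has $2\theta \geq \Delta \geq 2\sqrt{\delta}$. Phase estimation on $W(P)$ with precision $\Theta(\Delta)$ therefore reliably flags the $+1$-eigenvector; a conditional phase flip on the ``nonzero-phase'' register, followed by uncomputation, yields an operator that agrees with the exact reflection through $\ket{\pi}$ on the invariant two-dimensional subspace $\Span(\ket{\pi},\Pi_\emm\ket{\pi})$ up to inverse-polynomial error. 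This uses $O(1/\sqrt{\delta})$ invocations of $W(P)$, i.e.\ cost $O(\mathsf{U}/\sqrt{\delta})$ in our accounting.

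Adding up, setup contributes $\mathsf{S}$, and each of the $O(1/\sqrt{\eps})$ amplification rounds contributes $\mathsf{C} + O(\mathsf{U}/\sqrt{\delta})$, for a total of $\mathsf{S} + O(\mathsf{C}/\sqrt{\eps} + \mathsf{U}/\sqrt{\delta\eps})$. Since $\delta \leq 1$ this is bounded by $\mathsf{S} + O((\mathsf{U}+\mathsf{C})/\sqrt{\delta\eps})$, matching the claim (and in fact improving it on the $\mathsf{C}$-term, as in the MNRS sharpening).

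The main obstacle is controlling the error introduced by replacing the exact reflection through $\ket{\pi}$ with its phase-estimation approximation: since amplification alternates $\Omega(1/\sqrt{\eps})$ such reflections, the per-round error must be small enough that the accumulated damage still leaves a constant success probability. I would handle this by taking the phase-estimation precision a small constant factor finer than $\Delta$ (the resulting logarithmic overhead is absorbed into the $O$-notation) and by invoking a robust variant of amplitude amplification that tolerates small-error reflections in both slots; verifying that this two-dimensional subspace is indeed invariant under the exact $+1$-eigenspace reflection of $W(P)$ (using that $\Pi_\emm\ket{\pi}$ lies in $\ay \subset \ay+\bee$ where Corollary~\ref{cor-spectrum} applies) is the technical checkpoint that closes the argument.
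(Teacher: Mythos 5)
Your route is not Szegedy's. Theorem~\ref{thm-szegedy} is cited rather than proved in this survey; Szegedy's own argument quantizes the \emph{absorbing} walk (marked states made absorbing), runs the resulting walk operator for a random number of at most $O(1/\sqrt{\delta\eps})$ steps starting from $\ket{\pi}$, and detects the presence of marked states by how far the state drifts from $\ket{\pi}$ --- which is why his algorithm only \emph{decides} whether $M\neq\emptyset$, and why his bound can even be sharpened to the square root of the classical hitting time. What you propose is instead the MNRS scheme of Section~\ref{sec-walk-search}: amplitude amplification on $\ket{\pi}$, with the reflection through $\ket{\pi}$ simulated by phase estimation on $W(P)$. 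That is in principle a legitimate way to obtain Theorem~\ref{thm-szegedy} as a corollary of the stronger Theorem~\ref{thm-with-log-factor} (symmetric implies reversible, and $\mathsf{C}/\sqrt{\eps}\le\mathsf{C}/\sqrt{\delta\eps}$), and it buys you finding a marked element rather than merely detecting one. Your checks that $\ess=\Span(\ket{\pi},\ket{\mu})\subseteq\ay$ and that $\reflex(\emm)$ restricted to $\ess$ is $-\reflex(\mu^\perp)$ are correct.

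The gap is in your error control, and it is exactly the point the paper flags at the end of Section~\ref{sec-walk-search}. Phase estimation on $W(P)$ with precision only a constant factor finer than $\Delta$ leaves the approximate reflection $R$ with a \emph{constant} residual: for $\ket{\psi}\in\ay+\bee$ orthogonal to $\ket{\pi}$ one gets $\norm{(R+\id)\ket{\psi}}\le c$ for some fixed $c>0$, not an inverse-polynomial quantity. The hybrid argument then accumulates error $\Theta(c/\sqrt{\eps})$ over the $\Theta(1/\sqrt{\eps})$ rounds, which is not $o(1)$; so ``a small constant factor finer than $\Delta$'' does not close the argument. Driving the per-round error down to $2^{-k}$ requires $k$ independent repetitions of phase estimation as in Theorem~\ref{thm-diffusion1}, i.e.\ $O(k\Delta^{-1})$ controlled walk steps with $k=\Theta(\log(1/\sqrt{\eps}))$, and that factor sits on the update term: you obtain $\frac{1}{\sqrt{\delta\eps}}\log\frac{1}{\sqrt{\eps}}\,\mathsf{U}$, which is \emph{not} $O(\frac{1}{\sqrt{\delta\eps}}\mathsf{U})$ and hence does not match the stated bound. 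The ``robust variant of amplitude amplification'' you invoke to absorb the logarithm is precisely the recursive amplitude amplification of H{\o}yer et al.~\cite{HoyerMW03} that \cite{MNRS} adapts for this purpose; it is a genuine additional ingredient, not a bookkeeping remark. Without it --- or without falling back on Szegedy's original hitting-time analysis --- your argument establishes only the log-degraded bound.
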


The MNRS algorithm is a quantum analogue of {\bf Search Algorithm 2} for
ergodic and reversible Markov chains. 
It generalizes the algorithms of Ambainis an Szegedy, and it
combines their benefits in terms of being
able to find marked elements, incurring the smaller cost of the two,
and being applicable to a larger class of Markov chain.

\begin{theorem}[Magniez et al.~\cite{MNRS}]
\label{thm-with-log-factor}
Let $P$ be an ergodic and reversible Markov
chain, and let $\eps > 0$ be a lower bound on the probability that
an element chosen from the stationary distribution of $P$ is marked
whenever~$M$ is non-empty.  Then, there exists  a quantum algorithm which 
finds, with high probability, 
an element of $M$ if there is any at cost of
order
$
\mathsf{S} + \frac{1}{\sqrt{\eps}} 
( \frac{1}{\sqrt{\delta}}
\mathsf{U}+\mathsf{C} 
).$
\end{theorem}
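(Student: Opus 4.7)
The plan is to realize a quantum analogue of \textbf{Search Algorithm~2} via amplitude amplification, where the ``initial'' distribution is the stationary state $\ket{\pi} = \sum_{x} \sqrt{\pi_x} \ket{x}\ket{p_x}$ and the ``good'' subspace is the marked set $\emm \otimes \complex^X$. Let $\ket{\pi_M}$ denote the renormalized projection of $\ket{\pi}$ onto this subspace, so that $|\braket{\pi}{\pi_M}|^2 = \pi(M) \geq \eps$ whenever $M$ is non-empty. Standard amplitude amplification says that $O(1/\sqrt{\eps})$ iterations of the product $R_\pi \cdot R_M$, where $R_M$ is the phase flip on marked elements and $R_\pi = 2\proj{\pi} - \id$ is the reflection through $\ket{\pi}$, followed by measurement, produce an element of $M$ with constant probability. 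First I would prepare $\ket{\pi}$ from the setup state $\sum_x \sqrt{\pi_x}\ket{x}\ancilla$ by one application of the update routine, at cost $\mathsf{S} + \mathsf{U}$; $R_M$ is implemented directly at cost $\mathsf{C}$ by the checking oracle.

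The delicate ingredient is the reflection $R_\pi$, which cannot be implemented exactly without essentially resampling the chain. Instead I would use Corollary~\ref{cor-spectrum}: $\ket{\pi}$ is the unique $+1$-eigenvector of $W(P)$ on $\ay+\bee$, and every other eigenvalue $e^{\pm 2 \complexi \theta}$ satisfies $|2\theta| \geq \Delta \geq 2\sqrt{\delta}$. Thus, running phase estimation on $W(P)$ with precision $\Theta(\Delta)$ cleanly separates the $1$-eigenspace from its orthogonal complement: conditioning the phase flip on the estimated phase being non-zero yields an operator that agrees with $R_\pi$ on $\ket{\pi}$ and approximates it on vectors orthogonal to $\ket{\pi}$. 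The standard approximate reflection lemma (in the Marriott--Watrous style) guarantees that by iterating phase estimation with $O(1/\Delta) = O(1/\sqrt{\delta})$ applications of $W(P)$ and taking a constant number of repetitions to reduce error, we obtain a unitary $\tilde{R}_\pi$ with $\|\tilde{R}_\pi - R_\pi\| \leq c$ for any desired constant $c$, at cost $O(1/\sqrt{\delta})$ invocations of $W(P)$, hence $O(\mathsf{U}/\sqrt{\delta})$.

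I would then substitute $\tilde R_\pi$ for $R_\pi$ inside the amplitude amplification loop. The subtle step is to control error propagation: over $O(1/\sqrt{\eps})$ iterations naive bounds give total error $O(1/\sqrt{\eps}) \cdot c$, which is not small. The fix is to take the error $c$ per reflection to be a sufficiently small constant \emph{independent of $\eps$}, because in the two-dimensional subspace $\Span(\ket{\pi}, \ket{\pi_M})$ the exact amplitude amplification succeeds with a constant success probability margin; one shows (as in the MNRS analysis) that perturbing each reflection by $c$ shifts the final state by at most $O(1)$ in norm, still large enough that measurement produces a marked element with constant probability. This is the main technical obstacle and is the one place where reversibility is used, through the spectral characterization in Corollary~\ref{cor-spectrum}.

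Putting the three costs together yields a total complexity of $\mathsf{S} + \mathsf{U} + \frac{1}{\sqrt{\eps}}\bigl(\mathsf{C} + \tfrac{1}{\sqrt{\delta}}\mathsf{U}\bigr)$, which is of the claimed order since the leading $\mathsf{U}$ is absorbed. Finally I would note the standard ``unknown $\eps$'' trick: if only a lower bound on $\eps$ is available one runs the amplitude amplification with geometrically increasing guesses $1, 1/2, 1/4, \dots$ for the number of iterations and stops on the first success, increasing the total cost by only a constant factor, so the theorem holds in the form stated.
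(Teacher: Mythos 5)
Your overall architecture is the same as the paper's: amplitude amplification starting from $\ket{\pi}$, with the marked-subspace phase flip implemented at cost $\mathsf{C}$ and the reflection through $\ket{\pi}$ realized approximately by phase estimation on $W(P)$, using Corollary~\ref{cor-spectrum} and $\Delta \geq 2\sqrt{\delta}$ to bound the number of walk steps per reflection by $O(1/\sqrt{\delta})$. The gap is in your error-propagation step. After correctly observing that the naive bound accumulates to $O(c/\sqrt{\eps})$ over the $\Theta(1/\sqrt{\eps})$ iterations, you assert that choosing the per-reflection error $c$ to be a sufficiently small constant nevertheless keeps the final deviation at $O(1)$ because exact amplitude amplification succeeds with a constant margin. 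This does not follow, and it is false as stated: the hybrid recurrence $\norm{\ket{\psi_i}-\ket{\phi_i}} \leq \norm{\ket{\psi_{i-1}}-\ket{\phi_{i-1}}} + c$ is essentially tight, the per-step error vectors live outside the two-dimensional Grover subspace and are not controlled in direction, so they can add coherently, and after $\Theta(1/\sqrt{\eps})$ iterations a constant $c$ can destroy the state entirely. The constant success margin of ideal amplitude amplification protects against a constant \emph{total} perturbation, not a constant perturbation \emph{per iteration}.

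The two correct resolutions are the ones the paper points to. Either take the per-reflection error to be $2^{-k}$ with $k = \Theta(\log(1/\sqrt{\eps}))$, so that the accumulated error $O(2^{-k}/\sqrt{\eps})$ is a small constant; this is exactly the $\textbf{Quantum Search}(P)$ procedure of Section~\ref{sec-walk-search}, and it costs an extra $\log(1/\sqrt{\eps})$ factor on the update term, yielding only $\mathsf{S} + \frac{1}{\sqrt{\eps}}\bigl(\frac{1}{\sqrt{\delta}}\log\frac{1}{\sqrt{\eps}}\,\mathsf{U}+\mathsf{C}\bigr)$. Or, to obtain the clean bound actually claimed in the theorem, one must replace standard amplitude amplification by the recursive amplitude amplification of H{\o}yer, Mosca and de Wolf~\cite{HoyerMW03}, which is specifically designed to tolerate a constant error in each call; this is the route taken in~\cite{MNRS} and only cited, not reproduced, in this survey. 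Your proposal as written proves neither version: with constant $c$ the analysis fails, and with $c = O(\sqrt{\eps})$ you obtain the weaker, log-factor bound rather than the stated one.
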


There is an additional feature of Szegedy's algorithm
which doesn't fit into the MNRS algorithmic paradigm.
In fact, the quantity
$\frac{1}{\sqrt{\delta\eps}} $ in Theorem~\ref{thm-szegedy}
can be replaced by the square root of the classical hitting time~\cite{Szegedy04}.
The search algorithm for the 2-dimensional grid 
obtained this way, and the one given in~\cite{AmbainisKR05}
have smaller complexity than what follows from Theorem~\ref{thm-with-log-factor}.

\section{The MNRS search algorithm}
\label{sec-walk-search}

We give a high level description of the MNRS search algorithm.
Assume that 
$M \not= \emptyset$. 
Let $\emm=\complex^{M \times
X}$ denote the marked subspace, that is the subspace with marked items in the first register.
The purpose of the algorithm is to approximately transform the initial state~$\ket{\pi}$ to the target
state~$\ket{\mu}$, which is the normalized projection of~$\ket{\pi}$
onto 
$\emm$:  

$$\ket{\mu} \quad = \quad \frac{\Pi_\emm\ket{\pi}}{\norm{\Pi_\emm\ket{\pi}}}
          \quad = \quad \frac{1}{\sqrt{\eps}} \sum_{x \in M} \sqrt{\pi_x} \,
                 \ket{x}\ket{p_x},
$$
where~$\eps=\norm{\Pi_\emm\ket{\pi}}^2 = \sum_{x \in M} \pi_x$
is the probability of the set~$M$ of marked states under the stationary
distribution~$\pi$. Let us recall that
Grover search~\cite{Grover96} solves this problem via the
iterated use of the rotation
$
\reflex(\pi) \cdot \reflex(\mu^\perp)
$
in the two-dimensional real subspace~$\ess =
\Span(\ket{\pi},  \ket{\mu})$,
where $\ket{\mu^\perp}$ is the state 
in~$\ess$ orthogonal to~$\ket{\mu}$ making some acute angle $\varphi$
with~$\ket{\pi}$.
The angle~$\varphi$
is given by~$\sin \varphi =
\braket{\mu}{\pi} = \sqrt{\eps}$. 
Then $
\reflex(\pi) \cdot \reflex(\mu^\perp)
$ 
is a rotation by $2\varphi$ within the
space~$\ess$, and therefore $\Order(1/\varphi) =
\Order(1/\sqrt{\eps})$ iterations  of this rotation, starting with~$\ket{\pi}$,
approximates well~$\ket{\mu}$. The MNRS search basically follows this idea.

Restricted to the subspace~$\ess$, the operator $\reflex(\mu^\perp)$ is identical to
$- \reflex(\emm)$. Therefore, if the state of the algorithm
remains close to the subspace~$\ess$ throughout, it can be
implemented at the price of the checking cost.
The reflection $\reflex(\pi)$ is
computationally harder to perform. 
The idea is to apply the phase estimation algorithms of 
Kitaev~\cite{Kitaev96} and Cleve at al.~\cite{CleveEMM98} to
$W(P)$. Corollary~\ref{cor-spectrum}
implies that ~$\ket{\pi}$ is the only 1-eigenvector of $W(P)$, and all the
other eigenvectors have phase at least $\Delta$. Phase estimation approximately resolves
any state $\ket{\psi}$ in $ \ay + \bee$ along the eigenvectors of $W(P)$, and thus
distinguishes $\ket{\pi}$ from all the others. Therefore it is possible to flip the phase
of all states with a non-zero estimate of the phase, that is simulate
the effect of the operator $\reflex(\pi)$
in $ \ay + \bee$. The following result of~\cite{MNRS} resumes this discussion:

\begin{theorem}
\label{thm-diffusion1}
There exists a uniform family of quantum circuits $R(P)$ 
that uses $O( k \log ( \Delta^{-1}) ) $ additional qubits
and satisfies the following properties:
\begin{enumerate}\reduceitemm

\item
It makes $O(k \Delta^{-1})$ calls to the 
controlled quantum walk $\mathrm{c-}W(P)$ and its
inverse.
\item 
$R(P)\ket{\pi} = \ket{\pi}$.

\item If~$\ket{\psi} \in \ay + \bee$ is
orthogonal to~$\ket{\pi}$,
then~$\norm{ (R(P) + \id)  \ket{\psi} } \leq 2^{-k}$.

\end{enumerate}\reduceitemm
\end{theorem}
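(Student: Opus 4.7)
The plan is to realise $R(P)$ by $k$ independent rounds of phase estimation on the quantum walk operator $W(P)$, exploiting the spectral separation of Corollary~\ref{cor-spectrum}: on $\ay+\bee$ the state $\ket{\pi}$ is the only $1$-eigenvector of $W(P)$, while every other eigenvalue has angular distance at least $\Delta$ from $1$. Hence a circuit that ``detects'' nonzero phase and flips the sign accordingly implements the desired reflection on the relevant subspace.

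Concretely, I would adjoin $k$ ancilla registers $B_1,\ldots,B_k$, each of $m = \Order(\log \Delta^{-1})$ qubits initialised to $\ket{0}$. In each $B_i$ I run the standard phase-estimation subroutine on $W(P)$ using $N = \Order(\Delta^{-1})$ calls to $\mathrm{c-}W(P)$ and its inverse; I then apply the joint-ancilla sign-flip $2\Pi_0 - \id$, where $\Pi_0$ projects onto the all-zero ancilla $\ket{0}^{\otimes km}$; and finally I uncompute the $k$ phase estimations. Counting ancillae and queries immediately yields property~(1) and the ancilla bound from the preamble of the theorem.

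Property~(2) is immediate: since $W(P)\ket{\pi}=\ket{\pi}$, each phase estimation on $\ket{\pi}$ returns $\ket{0}$ deterministically, the sign-flip step acts trivially, and uncomputation restores $\ket{\pi}\ket{0}^{\otimes km}$. For property~(3), I decompose $\ket{\psi}=\sum_i c_i \ket{\lambda_i}$ in the $W(P)$-eigenbasis on $\ay+\bee$; by Corollary~\ref{cor-spectrum} each $\ket{\lambda_i}\perp\ket{\pi}$ has eigenvalue $e^{\complexi\varphi_i}$ with $\Delta\le |\varphi_i|\le 2\pi-\Delta$ (the $-1$-eigenspace corresponding to $\varphi_i=\pi$). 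A direct Dirichlet-kernel calculation shows that the amplitude $\alpha^{(i)}_0$ with which a single phase estimation on $\ket{\lambda_i}$ outputs the string~$0$ satisfies
\[
|\alpha^{(i)}_0| \;=\; \frac{|\sin(N\varphi_i/2)|}{N|\sin(\varphi_i/2)|} \;\le\; \frac{\pi}{N\Delta},
\]
so that $N$ a large enough constant times $\Delta^{-1}$ forces $|\alpha^{(i)}_0|\le 1/2$ uniformly. Propagating $\ket{\lambda_i}\ket{0}^{\otimes km}$ through the circuit step by step, the standard ``two successful measurements cancel'' calculation yields
\[
(R(P)+\id)\,\ket{\lambda_i}\ket{0}^{\otimes km} \;=\; 2\,\bigl(\alpha^{(i)}_0\bigr)^k\,(\mathrm{PE}^{\adjoint})^{\otimes k}\,\ket{\lambda_i}\ket{0}^{\otimes km}.
\]
Because phase estimation leaves the data register of an eigenvector $\ket{\lambda_i}$ unchanged up to a phase, the residuals for distinct $i$ remain orthogonal; Pythagoras then gives $\norm{(R(P)+\id)\ket{\psi}\ket{0}^{\otimes km}}\le 2\cdot 2^{-k}\norm{\ket{\psi}}$, and absorbing the factor $2$ into $k$ delivers property~(3).

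The main technical obstacle is the uniform Dirichlet-kernel bound on $|\alpha^{(i)}_0|$ across the entire range $|\varphi_i|\in[\Delta,2\pi-\Delta]$ — the naive cut-off $N=2\pi/\Delta$ already suffices, but if a sharper per-round contraction is needed one may amplify by median-of-independent-estimates, which boosts the per-round suppression without altering the asymptotic resource counts in~(1).
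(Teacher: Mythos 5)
Your construction is exactly the one the paper sketches (and that MNRS carry out in detail): $k$ parallel phase estimations on $W(P)$ with $\Order(\log\Delta^{-1})$ bits of precision each, a sign flip conditioned on the all-zero ancilla, and uncomputation, with the Dirichlet-kernel bound giving per-round amplitude at most $1/2$ on outcome $0$ for every eigenphase in $[\Delta,2\pi-\Delta]$. The resource counts, the exact invariance of $\ket{\pi}$, and the orthogonality-plus-Pythagoras step all check out, so the proposal is correct and follows essentially the same route as the paper.
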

The essence of the 
MNRS search algorithm 
is the following simple procedure  that
satisfies the conditions of 
Theorem~\ref{thm-with-log-factor}, but with a slightly higher 
complexity, of the order of 
$
\mathsf{S} + \frac{1}{\sqrt{\eps}} 
( \frac{1}{\sqrt{\delta}}
\log\frac{1}{\sqrt{\eps}} 
\mathsf{U}+\mathsf{C} 
).$ Again, we suppose that $\eps$ is known.
\encadre{ $\textbf{Quantum Search}(P)$
\begin{enumerate}\reduceitem
\item Start from the initial state $\ket{\pi}$.
\item Repeat $\Order(1/\sqrt{\eps})$--times:
\begin{enumerate}\reduceitemm
\item For any basis vector $\ket{x}\ket{y}$,
flip the phase  if $x\in M$.
\item Apply circuit $R(P)$ of Theorem~\ref{thm-diffusion1} with $k
= \Order( \log(1/\sqrt{\eps}))$.
\end{enumerate}
\item Observe the first register and 
output if it is in $ M$.
\end{enumerate}
} 
To see the correctness,
let $\ket{\phi_i}$ be the result of
$i$
iterations of 
$\reflex(\pi)\cdot\reflex(\mu^\perp)$
applied to $ \ket{\pi}$, 
and let
$\ket{\psi_i}$ be the result of
$i$  iterations of step (2) in 
\textbf{Quantum Search$(P)$} applied to $ \ket{\pi}$. 
It is not hard to show by induction 
on $i$, using a hybrid argument as in~\cite{BennettBBV97,Vazirani98b},
that
$\norm{\ket{\psi_i} - \ket{\phi_i}} \leq O(i 2^{-k})$.
This implies that
$\norm{\ket{\psi_k} - \ket{\phi_k}}$ is an arbitrarily small constant when
$k$ is chosen to be 
$ \Order( \log(1/\sqrt{\eps}))$
and therefore the success probability
is arbitrarily close to 1.

The cost of the procedure is simple to analyze. Initialization costs 
$\mathsf{S} + \mathsf{U}$,
and in each iteration the single phase flip costs $\mathsf{C}$. In the circuit
$R(P)$, the controlled quantum walk and its inverse can be implemented,
similarly to $W(P)$, at cost $4\mathsf{U}+2$, simply by
controlling $\reflex ( \complex^X\otimes\ancilla)$ and $\reflex ( \ancilla\otimes\complex^Y)$.
The number of steps of the controlled
quantum walk and its inverse is 
$\Order(  (1/{\Delta}) \log ( 1/ \sqrt{\eps}) )$. 
Since $\Delta \geq  2 \sqrt{\delta}$,
this finishes the cost analysis.
Observe that the $ \log ( 1/ \sqrt{\eps})$-factor in the update cost was necessary
for reducing the error of the approximate reflection operator.
In~\cite{MNRS} it is described how it can be eliminated
by adapting the recursive amplitude amplification algorithm of
H{\o}yer et al.~\cite{HoyerMW03}

\section{Applications}
We give here a few examples where the quantum search algorithms, in particular the MNRS
algorithm can be applied successfully.
All examples will be described in the query model of computation. Here the input is given
by an oracle, a query can be performed at unit cost, and all other computational steps are free.
A formal description of the model can be found for example in~\cite{MagniezN05}
or~\cite{MagniezSS05}. 
In fact, in almost all cases, the circuit complexity of the algorithms given will be of the order of
the query complexity, with additional logarithmic factors.
\subsection{Grover search}
As a first (and trivial) application, we observe that Grover's algorithm~\cite{Grover96}
for the unordered search problem is a special case of Theorem~\ref{thm-with-log-factor}.

\begin{quote}
\textsc{Unordered Search}\\
\textit{Oracle Input:} A boolean function $f$ defined on $[n]$.\\
\textit{Output:} An element $i \in [n]$ such that  
$f(i) = 1$.
\end{quote}

\begin{theorem}
\label{unordered}
\textsc{Unordered Search} can be solved with high probability
in quantum query complexity ${O}((n/k)^{1/2})$, where $|\{i \in [n]~:~f(i)=1\}| = k$.
\end{theorem}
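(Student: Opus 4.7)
The plan is to exhibit \textsc{Unordered Search} as a direct instantiation of the MNRS framework and then invoke Theorem~\ref{thm-with-log-factor}. Take $X = [n]$, $M = \{i \in [n] : f(i) = 1\}$, and choose $P$ to be the completely uniform Markov chain with $p_{xy} = 1/n$ for every $x,y \in X$. This chain is symmetric (hence reversible), irreducible, and aperiodic (self-loops are present), so it is ergodic; its stationary distribution is uniform, $\pi_x = 1/n$, and its only nonzero eigenvalue is $1$, so the eigenvalue gap is $\delta = 1$. Under $\pi$, the mass of~$M$ is $\eps = k/n$ whenever $M$ is non-empty.

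Next I would verify that the three cost parameters are all $O(1)$ in the query model, with no data structure attached. Setup amounts to preparing $\sum_x \sqrt{\pi_x}\ket{x}\ancilla = \tfrac{1}{\sqrt{n}}\sum_x \ket{x}\ancilla$, which requires no oracle queries (only a layer of Hadamards on $\lceil \log n \rceil$ qubits), so $\mathsf{S} = O(1)$. Since $p_{xy} = p^*_{yx} = 1/n$ is independent of the control register, the two update transformations
$\ket{x}\ancilla \mapsto \ket{x}\tfrac{1}{\sqrt{n}}\sum_{y}\ket{y}$ and
$\ancilla\ket{y} \mapsto \tfrac{1}{\sqrt{n}}\sum_{x}\ket{x}\ket{y}$
are again just Hadamards applied to the empty register, giving $\mathsf{U} = O(1)$. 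Finally, the map $\ket{x}\ket{y}\mapsto (-1)^{f(x)}\ket{x}\ket{y}$ costs a single query to $f$, so $\mathsf{C} = O(1)$.

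Plugging these bounds together with $\delta = 1$ and $\eps = k/n$ into Theorem~\ref{thm-with-log-factor} yields a total query cost of order
$\mathsf{S} + \tfrac{1}{\sqrt{\eps}}\bigl(\tfrac{1}{\sqrt{\delta}}\mathsf{U} + \mathsf{C}\bigr) = O\bigl(\sqrt{n/k}\bigr)$,
which is the claimed bound. There is essentially no obstacle in this reduction; the only thing worth highlighting is that choosing the complete-graph chain collapses the $\sqrt{\delta}$ factor, so the MNRS complexity reduces to the pure Grover scaling $\Theta(\sqrt{n/k})$, with the $\sqrt{\eps}$ amplification handled entirely by the outer loop of \textbf{Quantum Search}$(P)$.
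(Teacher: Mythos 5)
Your proposal is correct and follows essentially the same route as the paper: instantiate the MNRS theorem with a symmetric walk on the complete graph over $[n]$, observe that $\mathsf{S},\mathsf{U},\mathsf{C}$ are $O(1)$ with no data structure, and plug in $\eps = k/n$ and $\delta = \Theta(1)$. The only (immaterial) difference is that you add self-loops so that $\delta = 1$ exactly, whereas the paper uses the loopless complete-graph walk with $\delta = 1 - \tfrac{1}{n-1}$; both give the same $O(\sqrt{n/k})$ bound.
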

\begin{proof}
Consider the symmetric random walk in the complete graph
on $n$ vertices, where an element $v$ is marked if $f(v) = 1$. The eigenvalue gap of
the walk is $1 - \frac{1}{n-1}$, and the probability $\eps$ 
that an element is marked is $k/n$.
There is no data structure involved in the algorithm, the setup, update and  checking costs are 1.
\end{proof}

\subsection{Johnson graph based algorithms}
All these examples are based on the symmetric walk in the Johnson graph $J(n,r)$,
with eigenvalue gap $\Theta (1/r)$.
\subsubsection{Element Distinctness}
This is the original problem for which Ambainis
introduced the quantum walk based search method~\cite{Ambainis04}. 

\begin{quote}
\textsc{Element Distinctness}\\
\textit{Oracle Input:} A function $f$ defined on $[n]$. \\
\textit{Output:} A pair of distinct elements $i,j \in [n]$ such that  
$f(i) = f(j)$ if there is any,
otherwise reject.
\end{quote}

\begin{theorem}
\label{distinct}
\textsc{Element Distinctness} can be solved with high probability
in quantum query complexity ${O}(n^{2/3})$.
\end{theorem}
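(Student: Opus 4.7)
The plan is to apply Theorem~\ref{thm-with-log-factor} (or equivalently Theorem~\ref{thm-ambainis}) to the symmetric random walk on the Johnson graph $J(n,r)$, with $r$ to be optimized. A vertex is an $r$-subset $R \subseteq [n]$, and I will call $R$ marked if it contains a colliding pair, i.e.\ distinct $i,j \in R$ with $f(i)=f(j)$. If no colliding pair exists the algorithm will simply fail to find one; a standard constant-probability repetition then lets us reject.

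First I would describe the data structure: for a vertex $R$, store $d(R) = \{(i,f(i)) : i \in R\}$ sorted by $f$-value (say in a balanced binary search tree). This directly controls the three cost parameters. The setup requires querying $f$ on every element of $R$, giving $\mathsf{S} = r$. A single step of the walk on $J(n,r)$ removes one element and adds another, which is a single query plus $O(\log r)$ structural work, so $\mathsf{U} = O(1)$ in the query model. Most importantly, checking whether $R$ is marked uses only $d(R)$ (scan for repeated $f$-values) and costs $\mathsf{C} = 0$ queries.

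Next I would bound $\eps$. Assume at least one colliding pair $(i^*,j^*)$ exists. The probability under the stationary (uniform) distribution on $r$-subsets that $R$ contains both $i^*$ and $j^*$ is $\binom{n-2}{r-2}/\binom{n}{r} = r(r-1)/(n(n-1)) = \Omega(r^2/n^2)$, so $\eps = \Omega(r^2/n^2)$. Combined with the known eigenvalue gap $\delta = \Theta(1/r)$ of the Johnson graph walk, Theorem~\ref{thm-with-log-factor} yields total query cost
\[
\mathsf{S} + \frac{1}{\sqrt{\eps}}\Bigl(\frac{1}{\sqrt{\delta}}\,\mathsf{U} + \mathsf{C}\Bigr) \;=\; O\!\left(r + \frac{n}{r}\cdot\sqrt{r}\right) \;=\; O\!\left(r + \frac{n}{\sqrt{r}}\right).
\]
Balancing $r = n/\sqrt{r}$ gives $r = n^{2/3}$ and total cost $O(n^{2/3})$, as desired.

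The one step requiring real care is justifying the $\eps$ bound in the form required by the search theorem: the marked set here is not a single "fixed-$k$-subset" family, so I would invoke Theorem~\ref{thm-with-log-factor} (which only needs a lower bound on the stationary probability of $M$) rather than Theorem~\ref{thm-ambainis}. Alternatively, one can condition on a particular colliding pair $(i^*,j^*)$ and apply Theorem~\ref{thm-ambainis} with $k=2$; either way the $\Omega(r^2/n^2)$ bound is what drives the analysis. Finally, since the MNRS algorithm actually outputs a marked $R$, reading its stored $d(R)$ reveals an explicit colliding pair at no additional query cost, completing the algorithm.
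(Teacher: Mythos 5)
Your proposal is correct and follows essentially the same route as the paper: the symmetric walk on $J(n,r)$ with the data $\{(i,f(i)) : i \in R\}$, costs $\mathsf{S}=r$, $\mathsf{U}=O(1)$, $\mathsf{C}=0$, the bound $\eps=\Omega(r^2/n^2)$, and the optimization $r=n^{2/3}$. The extra care you take over which theorem to invoke (the MNRS bound needing only a lower bound on the stationary measure of $M$, versus the fixed-subset form of the Ambainis statement) and over the rejection case is a reasonable tightening of details the paper leaves implicit, but it is not a different argument.
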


\begin{proof}
A vertex $ R \subseteq [n]$ of size $r$ is marked if there exist $i \neq j \in R$ such $f(i) = f(j)$.
The probability $\eps$ that an element is marked, if there is any, is
in $\Omega ((r/n)^2)$. 
For every $R$,
the data 
is defined as $\{(v,f(v)):v\in R\}$. Then the setup cost 
is in $O(r)$, the
update cost  
is $O(1)$, and the checking cost is 0.
Therefore the overall cost is
$O( r + n/r^{1/2})$ which is $O(n^{2/3})$ when $r = n^{2/3}$.
This upper bound is tight, the $\Omega (n^{2/3})$ lower bound is due to 
Aaronson and Shi~\cite{AaronsonS04}.
\end{proof}

\subsubsection{Matrix Product Verification}
This problem was studied by Buhrman and 
Spalek~\cite{BuhrmanS06}, and the algorithm in the query model
is almost identical to the previous one.

\begin{quote}
\textsc{Matrix Product Verification}\\
\textit{Oracle Input:} Three $n \times n$ matrices $A,B$ and $C$. \\
\textit{Output:} Decide if $A B = C$ and 
in the negative case find indices $i,j$ such that $(AB)_{ij} \neq C_{ij}$.
\end{quote}

\begin{theorem}
\label{matrix}
\textsc{Matrix Product Verification} can be solved with high probability
in quantum query complexity ${O}(n^{5/3})$.
\end{theorem}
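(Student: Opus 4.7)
The plan is to mimic the Element Distinctness algorithm, but walk on \emph{pairs} of row/column index subsets so that each vertex carries enough information to spot a discrepancy of the form $(AB)_{ij}\neq C_{ij}$.

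First I would let the state space be $\{(R,S):R,S\subseteq[n],\ |R|=|S|=r\}$, equipped with the symmetric random walk that is the Cartesian product of the two Johnson graph walks on $J(n,r)$: each step picks one of the two coordinates and performs a Johnson transition there. This walk is symmetric, can be made ergodic by the usual lazy self-loop, and its eigenvalue gap is $\delta=\Theta(1/r)$ (the minimum of the two Johnson gaps). Declare a vertex $(R,S)$ to be \emph{marked} iff there exist $i\in R$ and $j\in S$ with $(AB)_{ij}\neq C_{ij}$. If $AB\neq C$ then at least one bad pair $(i_0,j_0)$ exists, so the probability that a uniformly random $(R,S)$ is marked is at least $\Pr[i_0\in R]\cdot\Pr[j_0\in S]=(r/n)^2$; thus $\eps=\Omega((r/n)^2)$.

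Next I would attach to each vertex $(R,S)$ the data $d(R,S)$ consisting of the submatrices $A_R$ (the $r$ rows of $A$ indexed by $R$), $B^S$ (the $r$ columns of $B$ indexed by $S$), the $r\times r$ product $A_RB^S$ computed from them without any new queries, and the $r\times r$ submatrix $C_{R,S}$. The setup cost is $\mathsf{S}=O(rn+r^2)=O(rn)$ queries. For the update, a single Johnson transition in, say, $R$ swaps one row index $i$ for a new index $i'$: we must read the new row $A_{i'\ast}$ ($n$ queries), the new row of $C$ restricted to columns in $S$ ($r$ queries), and then overwrite one row of the stored product using $A_{i'\ast}B^S$ (no queries), and symmetrically for an update in $S$, giving $\mathsf{U}=O(n)$. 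Since the data already contains $A_RB^S$ and $C_{R,S}$, the marked-check is free of queries: $\mathsf{C}=O(1)$ (in fact, zero).

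Plugging into Theorem~\ref{thm-with-log-factor} gives total query cost
$$
\mathsf{S}+\frac{1}{\sqrt{\eps}}\Bigl(\frac{\mathsf{U}}{\sqrt{\delta}}+\mathsf{C}\Bigr)
=O\!\left(rn+\frac{n}{r}\bigl(\sqrt{r}\,n+1\bigr)\right)
=O\!\left(rn+\frac{n^2}{\sqrt{r}}\right).
$$
Balancing the two terms by setting $r=n^{2/3}$ yields $O(n^{5/3})$, as required. If $AB=C$ no vertex is marked and the algorithm correctly rejects with high probability, while if $AB\neq C$ Theorem~\ref{thm-with-log-factor} returns a marked $(R,S)$ from which the witnessing indices $(i,j)$ can be read off directly from the stored $A_RB^S-C_{R,S}$.

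The main subtlety, which I would flag rather than belabor, is the bookkeeping for the incremental update of $A_RB^S$: each row swap in $R$ affects exactly one row of the stored product and each column swap in $S$ affects one column, so the update is coherent and reversible at the quoted cost; and the lower bound on $\eps$ relies only on the existence of one bad entry, so no Freivalds-style amplification of the discrepancy is needed. Everything else is a direct instantiation of the MNRS framework on the product Johnson graph.
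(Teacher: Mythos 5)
Your proof is correct and follows essentially the same route as the paper's: a Johnson-graph walk whose data structure stores the relevant rows of $A$, columns of $B$ and entries of $C$, with $\eps=\Omega((r/n)^2)$, $\mathsf{S}=O(rn)$, $\mathsf{U}=O(n)$, $\mathsf{C}=0$, balanced at $r=n^{2/3}$. The only (cosmetic) difference is that you walk on pairs $(R,S)$ of separate row and column subsets, i.e.\ a product of two Johnson graphs, whereas the paper uses a single subset $R\subseteq[n]$ for both rows and columns and stores $A|_R$, $B|^R$, $C|_R^R$; the parameters and the resulting bound are identical either way.
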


\begin{proof}
For an $n \times n$ matrix $M$ and a subset of indices $R \subseteq [n]$,
let $M|_R$ denote the $|R| \times n$ submatrix of $M$ corresponding to the rows 
restricted to $R$. The submatrices $M|^R$ and $M|_R^R$ are defined similarly, when 
the restriction concerns the  columns and both the rows and columns.
A vertex $R \subseteq [n]$ is marked if 
there exist $i,j \in R$ such that $AB_{ij} \neq C_{ij}$. 
The probability of being marked, if there is such an element,
is in $\Omega ((r/n)^2)$. 
For every $R$, 
the data  
is defined as
the set of entries in $A|_R, B|^R$ and $C|_R^R$.  Then the setup cost
is in $O(rn)$, the
update cost is $O(n)$, and the checking cost is 0. Therefore 
the overall cost is $O(n^{5/3})$ when $r = n^{2/3}.$
The best known lower bound is $\Omega (n^{3/2})$, and in ~\cite{BuhrmanS06}
an ${O}(n^{5/3})$ upper bound was also proven for the time complexity with a somewhat more
complicated argument.

\end{proof}

\subsubsection{Restricted Range Associativity }
The problem here is to decide if a binary operation $\circ$ is associative. The only algorithm known 
for the general case is Grover search, but D\"orn and Thierauf~\cite{DT} 
have proved that when the range of the operation is restricted, Theorem~\ref{thm-with-log-factor} 
(or Theorem~\ref{thm-ambainis})
give a non-trivial bound. A triple $(a,b,c)$  is called {\em non associative}
if $(a \circ b) \circ c \neq a \circ (b \circ c)$.
\begin{quote}
\textsc{Restricted Range Associativity}\\
\textit{Oracle Input:} A binary operation $\circ : [n] \times [n] \rightarrow [k]$
where $k \in O(1)$. \\
\textit{Output:} A non associative triple $(a,b,c)$ if there is any,
otherwise reject.
\end{quote}

\begin{theorem}
\label{associativity}
\textsc{Restricted Range Associativity} can be solved with high probability
in quantum query complexity ${O}(n^{5/4})$.
\end{theorem}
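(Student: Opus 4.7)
The plan is to nest a quantum walk inside a Grover search over the ``middle'' coordinate of a candidate non-associative triple. For each $b\in[n]$, an inner quantum routine decides whether there exist $a,c\in[n]$ with $(a\circ b)\circ c\neq a\circ(b\circ c)$, and the outer Grover over $b$ then finds a ``bad'' $b$, if any, in $O(\sqrt{n})$ iterations. The inner routine applies Theorem~\ref{thm-with-log-factor} to the symmetric walk on the Johnson graph $J(n,r)$: a vertex $R\in\binom{[n]}{r}$ is marked iff it contains some pair $(a,c)$ witnessing non-associativity for the current $b$.

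The essential use of the constant range $k$ is in keeping the data structure cheap. For $R$ with $b$ fixed, I store the values $a\circ b$ and $b\circ a$ for all $a\in R$, together with $v\circ a$ and $a\circ w$ for all $a\in R$ and $v,w\in[k]$. Since $k=O(1)$ this is $O(r)$ entries, obtained in $\mathsf{S}=O(r)$ queries; swapping one element of $R$ touches only $O(1)$ entries, so $\mathsf{U}=O(1)$; and the check is free, since for any $a,c\in R$ the values $(a\circ b)\circ c=v\circ c$ (with $v=a\circ b\in[k]$) and $a\circ(b\circ c)=a\circ w$ (with $w=b\circ c\in[k]$) are already tabulated, giving $\mathsf{C}=O(1)$.

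The Johnson graph has $\delta=\Theta(1/r)$, and in the worst case where a single pair $(a_0,c_0)$ witnesses non-associativity with this $b$, the marked set is exactly $\{R:\{a_0,c_0\}\subseteq R\}$, so $\eps=\Omega((r/n)^2)$ (this also puts us in the scope of Theorem~\ref{thm-ambainis} with $k=2$). Theorem~\ref{thm-with-log-factor} then bounds the inner cost by $\mathsf{S}+\eps^{-1/2}\bigl(\delta^{-1/2}\mathsf{U}+\mathsf{C}\bigr)=O\bigl(r+n/\sqrt{r}\bigr)$. Choosing $r=\sqrt{n}$ gives $O(n^{3/4})$ per inner call, and multiplying by the $O(\sqrt{n})$ outer iterations yields the claimed $O(n^{5/4})$.

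The main technical obstacle is nested error control: the outer Grover needs each inner call to return a correct yes/no decision (with a witness on ``yes'') with probability very close to $1$, so that Grover's interference pattern is preserved over $O(\sqrt{n})$ iterations. This is handled by a standard $O(\log n)$ rounds of amplification around the ``find'' variant of Theorem~\ref{thm-with-log-factor}, which drives the inner failure probability below $1/\sqrt{n}$ at only polylogarithmic query overhead, easily absorbed into the $O(n^{5/4})$ bound.
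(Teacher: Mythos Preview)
Your argument is correct and reaches the same $O(n^{5/4})$ bound, but via a genuinely different decomposition from the paper's. The paper runs a \emph{single} quantum walk on $J(n,r)$ in which the walked set $R$ is required to contain both $a$ and $b$ (not $a$ and $c$); its data structure is the full multiplication table on $R\cup[k]$, of size $O(r^2)$, giving $\mathsf{S}=O(r^2)$ and $\mathsf{U}=O(r)$. The checking step then Grover-searches over $b\in R$ and $c\in[n]$, noting that for fixed $(b,c)$ one can read off $(a\circ b)\circ c$ and $a\circ(b\circ c)$ for all $a\in R$ with only $k+1$ fresh queries, so $\mathsf{C}=O(\sqrt{rn})$. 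Plugging into Theorem~\ref{thm-with-log-factor} gives $O\bigl(r^2+\tfrac{n}{r}(\sqrt{r}\,r+\sqrt{rn})\bigr)=O(n^{5/4})$ at $r=\sqrt n$. Your route instead fixes $b$ by an outer Grover and walks over $\{a,c\}$; because the range is constant you can tabulate $a\circ b$, $b\circ a$, and the $[k]\times R$ and $R\times[k]$ slices in $O(r)$ entries, so $\mathsf{S}=O(r)$, $\mathsf{U}=O(1)$, $\mathsf{C}=0$, and the inner cost drops to $O(n^{3/4})$ at $r=\sqrt n$, compensated exactly by the outer $O(\sqrt n)$ factor. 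In short: the paper pays a heavier data structure to keep everything inside one walk; you pay an outer Grover loop to keep the data structure light. Both balance at the same optimum, and your nested-error remark (amplify the inner routine by $O(\log n)$ repetitions, or invoke the bounded-error Grover of H{\o}yer--Mosca--de Wolf) is the right way to make the composition rigorous.
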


\begin{proof}
We say that $R \subseteq [n]$ is marked
if there exist $a,b \in R$ and $c \in [n]$ such that $(a,b,c)$ is non associative.
Therefore 
$\eps$ is
in $\Omega ((r/n)^2)$, if there is a marked element.
For every $R \subseteq [n]$, the data structure is defined as
$ \{(a,b,a \circ b) : a,b \in R \cup [k] \}$. Then the setup cost is
$O((r+k)^2 = O(r^2)$ and the update cost is $O(r+k) = O(r)$.
Observe that if $b \in R$ and $c \in [n]$ are fixed, then computing
$(a \circ b) \circ c$ and $a \circ (b \circ c)$ for all $a \in R$ requires at most $k+1$ queries
with the help of the data structure. Thus using Grover search to find $b$ and $c$, the checking cost is
$O(k \sqrt{rn}) = O(\sqrt{rn})$. The overall complexity is then
${O}(r^2+\tfrac{n}{r}(\sqrt{r} r + \sqrt{rn}))$
which is ${O}(n^{5/4})$ when $r= \sqrt{n}$. 
The best lower bound known both in the restricted range and the general case is $\Omega(n)$.

\end{proof}

\subsubsection{Triangle}
In an undirected graph $G$,
a complete subgraph on three vertices is called a {\em triangle}. The algorithm of
Magniez et al.~\cite{MagniezSS05}  for finding a triangle uses
the algorithm for \textsc{Element Distinctness} in the checking procedure.

\begin{quote}
\textsc{Triangle}\\
\textit{Oracle Input:} The adjacency matrix $f$ of a graph $G$
on vertex set $[n]$.\\
\textit{Output:} A triangle if there is any,
otherwise reject.
\end{quote}

\noindent
\begin{theorem}
\label{triangle}
\textsc{Triangle} can be solved with high probability
in quantum query complexity ${O}(n^{13/10})$.
\end{theorem}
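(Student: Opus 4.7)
The plan is to invoke Theorem~\ref{thm-with-log-factor} with the symmetric walk on a Johnson graph $J(n,r_1)$, exactly as in Theorems~\ref{distinct}--\ref{associativity}, but to implement the checking step itself by a \emph{second} level of quantum search that exploits the data already stored at each outer vertex. At the outer level, call $R\subseteq[n]$ of size $r_1$ marked when $R$ contains two of the three vertices of some triangle of $G$; the standard Johnson-graph parameters give $\delta=\Theta(1/r_1)$ and, whenever a triangle exists, $\eps=\Omega((r_1/n)^2)$, while keeping the induced subgraph $G[R]$ as the data structure yields $\mathsf{S}=O(r_1^2)$ and $\mathsf{U}=O(r_1)$.

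The crucial new ingredient is the checking subroutine. Given $R$ with $G[R]$ already available, I plan to Grover search over candidate third vertices $c\in[n]$, so that it suffices to test, for each $c$, whether some edge of $G[R]$ has both endpoints in $N(c)$. Instead of querying all $r_1$ adjacencies $\{c,x\}$ for $x\in R$, I will invoke Theorem~\ref{thm-with-log-factor} a second time on a smaller Johnson graph $J(R,r_2)$: a subset $T\subseteq R$ of size $r_2$ is inner-marked if it contains two $G[R]$-adjacent vertices that both lie in $N(c)$, and the data at $T$ is, for each $x\in T$, the single bit indicating whether $\{c,x\}$ is an edge. Then $\mathsf{S}_2=O(r_2)$, $\mathsf{U}_2=O(1)$, $\delta_2=\Theta(1/r_2)$, and $\eps_2=\Omega((r_2/r_1)^2)$ (both endpoints of the witnessing edge must lie in $T$), so the inner cost is $O(r_2+r_1/\sqrt{r_2})$, optimized at $r_2=r_1^{2/3}$ to $O(r_1^{2/3})$ queries per $c$; combined with Grover over $c$ this yields a total checking cost $\mathsf{C}=O(\sqrt{n}\,r_1^{2/3})$.

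Substituting these parameters into Theorem~\ref{thm-with-log-factor} at the outer level produces a total cost of order $r_1^2+n\sqrt{r_1}+n^{3/2}/r_1^{1/3}$. To finish, I verify that the choice $r_1=n^{3/5}$ equalizes the last two terms, both equal to $n^{13/10}$, and makes the setup term $r_1^2=n^{6/5}$ negligible, giving the claimed $O(n^{13/10})$ bound. The main obstacle is designing the inner walk so that a single vertex swap in $T$ costs only $O(1)$ queries rather than $O(r_1)$---which would be the cost if one instead tried to store $N(c)\cap R$ for every $c$ in a walk over subsets of $[n]$: it is precisely this drop in the inner update cost that lets the nested square root be effective and breaks the $n^{3/2}$ barrier produced by a single level of quantum walk or by Grover over $c$ alone.
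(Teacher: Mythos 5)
Your proposal is correct and follows essentially the same route as the paper's proof: an outer Johnson-graph walk $J(n,r)$ with $r=n^{3/5}$ marking sets that contain a triangle edge, with the checking step implemented as a Grover search over the third vertex combined with an inner walk on $J(r,r^{2/3})$ whose data is the adjacency bits to that vertex, giving checking cost $O(\sqrt{n}\,r^{2/3})$ and total cost $O(n^{13/10})$. The only detail you leave implicit is the final extraction of the full triangle once a marked set (and its witnessing edge) is found, which costs an additional $O(\sqrt{n})$ Grover queries and does not affect the bound.
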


\begin{proof}
We show how to find the edge of a triangle, if there is any, in 
query complexity ${O}(n^{13/10})$.
This implies the theorem 
since given such an edge, Grover search finds the third vertex
of the triangle with $O(n^{1/2})$ additional queries.

An element $R \subseteq [n]$ is marked if it contains a triangle edge.
The probability $\eps$ that an element is marked is
in $\Omega ((r/n)^2)$, if there is a triangle.
For every $R$,  
the data structure is the adjacency matrix of the subgraph induced by $R$.
defined as $\{(v,f(v)):v\in R\}$. 
Then the setup cost 
is $O(r^2)$, and the
update cost  
is $O(r)$. 
The interesting part of the algorithm is the checking procedure, 
which is a quantum walk based search itself, and the claim is
that it can be done at cost  
$O(\sqrt{n}\times r^{2/3})$.

To see this, let $R$ be a set of $r$ vertices such that 
the graph $G$ restricted
to $R$ is explicitly known,
and for which we would like to decide if it is marked. Observe that $R$ is marked exactly when 
there exists a vertex $v \in [n]$ such that $v$ and an edge in $R$ form a triangle.
Therefore for every vertex $v$, one can define a 
secondary search problem on $R$ via the  boolean oracle $f_v$,
where for every $u \in R$, by definition $f_v(u) = 1$ if $\{u,v\}$ is an edge. The output of the 
problem is by definition positive if there is an edge $\{u,u'\}$ such that $f_v(u) = f_v(u') = 1$.
To solve the problem we consider the Johnson graph $J(r, r^{2/3})$,
and look for a subset which contains
such an edge. In that search problem both the probability 
of being marked and the eigenvalue gap 
of the underlying Markov chain are in $\Omega ( r^{-2/3})$.
The data associated with a subset of $R$ is just the values of $f_v$ at its 
elements. Then the setup cost is $r^{2/3}$, the update cost is $O(1)$, and the checking cost is 0.
Therefore by Theorem~\ref{thm-with-log-factor} 
the cost of solving a secondary search problem is in $O(r^{2/3})$.
Finally the checking procedure of the original search problem consists 
of a Grover search for a vertex $v$
such that the secondary search problem defined by $f_v$ has a positive outcome.
Putting things together, the problem can be solved in quantum query complexity
${O}(r^2+\tfrac{n}{r}(\sqrt{r}\times r + \sqrt{n}\times r^{2/3}))$
which is ${O}(n^{13/10})$ when $r=n^{3/5}$.
The best known lower bound for the problem is $\Omega(n)$.
\end{proof}

\subsection{Group commutativity}
The problem here is to decide if a group multiplication is commutative in the
(sub)group generated by some set of group elements. It was defined and studied
in the probabilistic case by Pak~\cite{Pak}, the quantum algorithm
is due to Magniez and Nayak~\cite{MagniezN05}.

\begin{quote}
\textsc{Group Commutativity}\\
\textit{Oracle Input:} The multiplication operation $\circ $ for a finite group 
whose base set contains $[n]$.\\
\textit{Output:} A non commutative couple $(i,j) \in [n] \times [n]$ if $G$,
the (sub)group generated by $[n]$, is non-commutative,
otherwise reject.
\end{quote}

\begin{theorem}
\label{commutativity}
\textsc{Group Commutativity} can be solved with high probability
in quantum query complexity ${O}(n^{2/3} \log n)$.
\end{theorem}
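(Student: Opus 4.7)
The plan is to instantiate Theorem~\ref{thm-with-log-factor} with a walk on the Johnson graph $J(n,k)$ for $k=\Theta(n^{2/3})$. The state space is the set of $k$-subsets $T\subseteq[n]$ with the usual Johnson adjacency (symmetric difference of size $2$). To each $T$ I would attach the ordered group product $P(T)=g_{j_1}\circ g_{j_2}\circ\cdots\circ g_{j_k}$, with indices in increasing order, where $g_j$ denotes the generator named by $j\in[n]$. The marking predicate is: $T$ is \emph{marked} iff some $g_i$, $i\in[n]$, fails to commute with $P(T)$.

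For the parameters, the Johnson walk on $J(n,k)$ is symmetric with eigenvalue gap $\delta=\Theta(1/k)$. For $\eps$ I would invoke the combinatorial lemma of Pak~\cite{Pak}: when the subgroup $G=\langle[n]\rangle$ is non-commutative, a random $k$-subset of $[n]$ is marked with probability $\eps=\Omega(k/n)$. For the data I would maintain, together with $T$, a balanced binary tree whose leaves store the $g_j$ (in index order) and whose internal nodes store the product of their subtree, so that the root carries $P(T)$. Building this tree from scratch costs $\mathsf{S}=O(k)$ queries to $\circ$, and a single leaf swap updates only the $O(\log k)$ ancestors on a root-to-leaf path, giving $\mathsf{U}=O(\log k)$. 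Checking whether $T$ is marked is an unordered search over $[n]$ for a witness $g_i$ that fails to commute with the already-stored $P(T)$, so by Theorem~\ref{unordered} $\mathsf{C}=O(\sqrt{n})$.

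Plugging into Theorem~\ref{thm-with-log-factor} gives total cost
\[
\mathsf{S}+\frac{1}{\sqrt{\eps}}\Bigl(\frac{\mathsf{U}}{\sqrt{\delta}}+\mathsf{C}\Bigr)
=O\!\Bigl(k+\sqrt{n/k}\,\bigl(\sqrt{k}\,\log k+\sqrt{n}\bigr)\Bigr)
=O\!\Bigl(k+\sqrt{n}\,\log n+\tfrac{n}{\sqrt{k}}\Bigr),
\]
which is minimized at $k=n^{2/3}$ and yields $O(n^{2/3}\log n)$. To produce the required non-commuting pair rather than just a decision, observe that if $g_i$ commuted with every $g_j$ for $j\in T$, then by telescoping it would commute with $P(T)$; so whenever $T$ is declared marked with witness $i$, there is some $j\in T$ with $g_i\circ g_j\neq g_j\circ g_i$, and one last Grover search over $T$ of cost $O(\sqrt{k})=O(n^{1/3})$ extracts such a $j$.

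The step I expect to be the main obstacle is the bound $\eps=\Omega(k/n)$: the distribution of $P(T)$ over $G$ for a random $k$-subset $T$ is not uniform, and one must argue quantitatively that the products are spread out enough off the centralizer of some specific generator. Once that probabilistic lemma is granted, the rest is a routine composition of the MNRS template with the product-tree data structure already used in Section~5 for other Johnson-graph applications, and the arithmetic above gives the claimed complexity.
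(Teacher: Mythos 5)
Your construction is not the one in the paper, and it has a genuine gap at its central step: the bound $\eps=\Omega(k/n)$ for the event that $P(T)$ lies outside the center $Z(G)$. You attribute this to Pak's lemma, but Pak's result concerns random \emph{subproducts} in which each generator is included independently with probability $1/2$ (so the random subset has expected size $n/2$), and it only gives a \emph{constant} lower bound on the probability of escaping a proper subgroup; it says nothing about uniformly random $k$-subsets with $k=n^{2/3}=o(n)$, let alone with the product taken in a fixed index order. The statement you actually need --- that for every proper subgroup $K\leq G$ the ordered product of a random $k$-subset of the generators escapes $K$ with probability $\Omega(k/n)$ --- is precisely the non-trivial group-theoretic lemma that Magniez and Nayak had to establish (for random $r$-tuples of distinct generators) to obtain their $\eps=\Theta(r^2/n^2)$ bound; it is the technical heart of the whole result and cannot be treated as granted. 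A further warning sign: your own arithmetic gives $k+\sqrt{n}\log k+n/\sqrt{k}=O(n^{2/3})$ at $k=n^{2/3}$, with no log factor at all, which would match the $\Omega(n^{2/3})$ lower bound and beat the best walk-based bound the paper reports, namely $O((n\log n)^{2/3})$ for the Johnson-graph variant attributed to Magniez and Nayak. When a routine instantiation appears to yield an optimal algorithm, the unproved lemma is carrying all the weight.

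For comparison, the paper's proof is quite different. It walks not on $k$-subsets but on \emph{pairs} $(u,v)$ of ordered $r$-tuples of distinct generators, using a specific exchange walk (not the Johnson graph) whose eigenvalue gap is only $\Omega(1/(r\log r))$; it marks $(u,v)$ when $\bar{u}\circ\bar{v}\neq\bar{v}\circ\bar{u}$, for which Magniez--Nayak prove $\eps=\Theta(r^2/n^2)$ when $G$ is non-commutative and $r=o(n)$, so the checking cost is $O(1)$ rather than your Grover search over $[n]$; the product trees give $\mathsf{S}=O(r)$ and $\mathsf{U}=O(\log r)$ as in your proposal; and the optimization is done at $r=n^{2/3}\log n$, giving $O\bigl(r+\tfrac{n}{r}(\sqrt{r\log r}\,\log r+1)\bigr)=O(n^{2/3}\log n)$, with a final $O(r)$ search to extract the non-commuting couple. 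If you want to salvage your single-subset predicate you must actually prove the $\Omega(k/n)$ escape lemma for fixed-order subset products (and say a word about composing a bounded-error Grover checker inside the MNRS iteration); as written, the proof is incomplete exactly where it matters.
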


\begin{proof}
For $0 < r < n$ let $S(n,r)$ be the set of all $r$-tuples of distinct elements from $[n]$.
For $u = (u_1, \ldots , u_r)$ in $S(n,r)$, we set $\bar{u} = u_1 \circ \cdots \circ u_r$.
We define a random walk over $S(n,r)$. Let $u = (u_1, \ldots , u_r)$ be the current vertex.
Then with probability 1/2 stay at $u$, and with probability 1/2 
pick uniformly random $i \in [r]$ and  $j \in [n]$.
If $j = u_m$ for some $m$ then exchange $u_i$ and $u_m$, otherwise set $u_i = j$.
The random walk $P$ at the basis of the quantum algorithm is over $S(n,r) \times S(n,r)$,
and it  consists of two independent simultaneous copies of the above walk. The
stationary distribution of $P$ is the uniform distribution, and it is proven 
in~\cite{MagniezN05} that its eigenvalue gap is $\Omega (1/(r \log r))$. 

A vertex $(u,v)$ is marked if $\bar{u} \circ \bar{v} \neq \bar{v} \circ \bar{u}$. It is proven again
in~\cite{MagniezN05} that when $G$ is non-commutative and $r \in o(n)$, 
then the probability $\eps$ that an element is marked is
$\Theta (r^2/n^2)$. For $u \in S(n,r)$ let $T_u$ be the balanced binary tree with $r$ leaves that
are labeled from left to right by $u_1, \ldots, u_r$, and where each internal node is
labeled by the product of the labels of its two sons. For every vertex $(u,v)$ the data consists
of $(T_u,T_v)$. Then the setup cost is $r$, and the update cost is $O(\log r)$ for recomputing
the leaf--root paths.
The checking cost is simply
$2$ for querying $\bar{u} \circ \bar{v}$ and $\bar{v} \circ \bar{u}$.
Therefore the query complexity to find a marked element is
${O}(r+\frac{n}{r}(\sqrt{r \log r} \log r + 1))$
which is ${O}(n^{2/3} \log n)$ when $r= n^{2/3} \log n$. Once a marked element is found,
Grover search yields a non-commutative couple at cost $O(r)$.
In~\cite{MagniezN05} an $\Omega (n^{2/3})$ lower bound is also proven. 
And, it turns out that a Johnson graph based walk can be applied to this problem 
too~\cite{MagniezN08}, yielding an algorithm of complexity ${O}((n \log n)^{2/3} )$.

\end{proof}

\section*{Acknowledgment}
I would like to thank Fr\'ed\'eric Magniez, Ashwin Nayak and J\'er\'emie Roland, my coauthors in \cite{MNRS}, for letting me to include here
several  ideas which were developed during that work, and for numerous helpful  suggestions.


\begin{thebibliography}{20}

\bibitem{AaronsonS04}
S.~Aaronson and Y.~Shi.
\newblock Quantum lower bounds for the collision and the element distinctness
  problems.
\newblock {\em Journal of the ACM}, pages 595--605, 2004.

\bibitem{AaronsonA05}
S. Aaronson and A. Ambainis.
\newblock Quantum search of spatial regions.
\newblock {\em Theory of Computing}, 1:47--79, 2005.

\bibitem{AharonovAKV01}
D.~Aharonov, A.~Ambainis, J.~Kempe, and U.~Vazirani.
\newblock Quantum walks on graphs.
\newblock In {\em Proc. of the 33rd ACM Symposium on Theory of
  Computing}, pages 50--59, 2001.

\bibitem{AKLLR}
R. Aleliunas, R. Karp, R. Lipton, L. Lov\'asz, and C. Rackoff.
Random Walks, Universal Traversal Sequences, and the Complexity of Maze Problems.
\newblock In {\em {Proc. of the 20th Symposium on Foundations of Computer
  Science}}, pages 218--223, 1979.

\bibitem{Amb1}
A. Ambainis.
Quantum walks and their algorithmic applications.
{\em International Journal of Quantum Information}, 1(4): 507--518, 2003.

\bibitem{Amb2}
A. Ambainis.
Quantum search algorithms.
{\em SIGACT News}, 35(2): 22--35, 2004.

\bibitem{Ambainis04}
A. Ambainis.
\newblock {Quantum Walk Algorithm for Element Distinctness}.
{\em SIAM Journal on Computing}, 37:210--239, 2007.

\bibitem{AmbainisBNVW01}
A.~Ambanis, E.~Bach, A.~Nayak, A.~Vishwanath, and J.~Watrous.
\newblock One-dimensional quantum walks.
\newblock In {\em Proc. of the 33rd ACM Symposium on Theory of
  computing}, pages 37--49, 2001.
  
\bibitem{AmbainisKR05}
A.~Ambainis, J.~Kempe, and A.~Rivosh.
\newblock Coins make quantum walks faster.
\newblock In {\em Proc. of the 16th ACM-SIAM Symposium on Discrete
  Algorithms}, pages 1099--1108, 2005.

\bibitem{BennettBBV97}
C. Bennett, E. Bernstein, G. Brassard, and U. Vazirani.
\newblock Strengths and weaknesses of quantum computing.
\newblock {\em SIAM Journal on Computing}, 26(5):1510--1523, 1997.

\bibitem{BrassardHMT02}
G. Brassard, P. H{\o}yer, M. Mosca, and A. Tapp.
\newblock Quantum amplitude amplification and estimation.
\newblock In Jr. Samuel J.~Lomonaco and Howard~E. Brandt, editors, {\em Quantum
  Computation and Quantum Information: A Millennium Volume}, volume 305 of {\em
  Contemporary Mathematics Series}. American Mathematical Society, pages 53--74, 2002.

\bibitem{BuhrmanS06}
H.~Buhrman and R.~Spalek.
\newblock Quantum verification of matrix products.
\newblock In {\em Proc. of the 17th ACM-SIAM Symposium on Discrete
  Algorithms}, pages 880--889, 2006.

\bibitem{CleveEMM98}
R.~Cleve, A.~Ekert, C.~Macchiavello, and M.~Mosca.
\newblock Quantum algorithms revisited.
\newblock {\em Proc. of the Royal Society A: Mathematical, Physical and
  Engineering Sciences}, 454(1969):339--354, 1998.
  
\bibitem{CLRS}
T. Cormen, C. Leiserson, R. Rivest, and C. Stein.
{\em Introduction to Algorithms}.
Second Edition The MIT Press and McGraw-Hill, 2001.

\bibitem{DT}
S. D\"orn, and T. Thierauf. 
The Quantum Query Complexity of Algebraic Properties. 
In {\em Proc. of the 16th International Symposium on
Fundamentals of Computation Theory},
pages 250--260, 2007.

\bibitem{Grover96}
L. Grover.
\newblock {A fast quantum mechanical algorithm for database search}.
\newblock In {\em {Proc. of the 28th ACM Symposium on the Theory of
  Computing}}, pages 212--219, 1996.
  
\bibitem{Halmos74}
P. Halmos.
{\em Finite-dimensional vector spaces}.
Springer Verlag, 1974.

\bibitem{HoyerMW03}
P.~H{\o}yer, M.~Mosca, and {R. de} Wolf.
\newblock Quantum search on bounded-error inputs.
\newblock In {\em Proc. of the 30th International
  Colloquium on Automata, Languages and Programming}, volume 2719 of {\em
  Lecture Notes in Computer Science}, pages 291--299, 2003.
  
\bibitem{Kem1}
J. Kempe.
Discrete Quantum Walks Hit Exponentially Faster. 
In {\em Proc. of the 
International Workshop on Randomization and Approximation Techniques in Computer Science},
pages 354--369, 2003.

\bibitem{Kem2}
J. Kempe.
Quantum random walks -- an introductory survey,
{\em Contemporary Physics}, 44(4): 307-327, 2003.

\bibitem{Kitaev96}
A.~Kitaev.
\newblock Quantum measurements and the abelian stabilizer problem.
\newblock {\em Electronic Colloquium on Computational Complexity (ECCC)}, 3, 1996.

\bibitem{Knu}
D. Knuth. 
{\em Sorting and Searching}.
Volume 3 of 
{\em The Art of Computer Programming}.
Addison-Wesley, 1973

\bibitem{MagniezN05}
F.~Magniez and A.~Nayak.
\newblock Quantum complexity of testing group commutativity.
{\em Algorithmica}, 48(3):221--232, 2007.

\bibitem{MagniezN08}
F.~Magniez and A.~Nayak.
Personal communication, 2008.
  
\bibitem{MNRS}
F. Magniez, A. Nayak, J. Roland, and M. Santha.
Search via quantum walk. 
\newblock In {\em Proc. of the 39th ACM Symposium on Theory of
  Computing}, pages 575--584, 2007.

\bibitem{MagniezSS05}
F.~Magniez, M.~Santha, and M.~Szegedy.
\newblock {Quantum Algorithms for the Triangle Problem}.
\newblock {\em SIAM Journal of Computing}, 37(2):413--427, 2007.

\bibitem{Meyer96}
D.~Meyer.
\newblock From quantum cellular automata to quantum lattice gases.
\newblock {\em Journal of Statistical Physics}, 85(5-6):551--574, 1996.

\bibitem{Meyer96b}
D.~Meyer.
\newblock On the abscence of homogeneous scalar unitary cellular automata.
\newblock {\em Physical Letter A}, 223(5):337--340, 1996.

\bibitem{MR}
C. Moore, and Alexander Russell.
Quantum Walks on the Hypercube. 
In {\em Proc. of the 6th 
International Workshop on Randomization and Approximation Techniques in Computer Science},
pages 164--178, 2002.


\bibitem{NayakV00}
A.~Nayak and A.~Vishwanath.
\newblock Quantum walk on the line.
\newblock Technical Report quant-ph/0010117, arXiv, 2000.

\bibitem{Pak}
I. Pak.
Testing commutativity of a group and the power of randomization. Electronic version at
http://www-math.mit.edu/~pak/research.html, 2000.


\bibitem{ric07}
P.~Richter.
\newblock Almost uniform sampling via quantum walks.
\newblock {\em New Journal of Physics},
\newblock Vol. 9, 72, 2007.

\bibitem{Sch}
U. Sch\"oning.
A Probabilistic Algorithm for k -SAT Based on Limited Local Search and Restart. 
{\em Algorithmica} 32(4): 615--623, 2002.

\bibitem{ShenviKW03}
N.~Shenvi, J.~Kempe, and K.B. Whaley.
\newblock Quantum random-walk search algorithm.
\newblock {\em Physical Review A}, 67(052307), 2003.

\bibitem{Szegedy04}
M. Szegedy.
\newblock {Quantum Speed-Up of {M}arkov Chain Based Algorithms}.
\newblock In {\em {Proc. of the 45th IEEE Symposium on Foundations of
  Computer Science}}, pages 32--41, 2004.

\bibitem{Vazirani98b}
U. Vazirani.
\newblock On the power of quantum computation.
\newblock {\em Philosophical Transactions of the Royal Society of {London},
  Series~A}, 356:1759--1768, 1998.

\bibitem{Watrous01}
J.~Watrous.
\newblock Quantum simulations of classical random walks and undirected graph
  connectivity.
\newblock {\em Journal of Computer and System Sciences}, 62(2):376--391, 2001.

\end{thebibliography}
\end{document}